\documentclass{article}
\usepackage{tikz, amsfonts, braket, amsmath, amssymb, amsthm, enumerate, subcaption,graphicx,cite,xcolor, setspace, lineno}
\usepackage[margin = 1in]{geometry}
\usetikzlibrary{decorations.markings,arrows.meta}
\usepackage[colorlinks = true]{hyperref}
\usepackage{soul}

\definecolor{lightblue}{RGB}{173,216,230}
\sethlcolor{lightblue}
\allowdisplaybreaks

\newtheorem{definition}{Definition}

\newtheorem{theorem}{Theorem}

\newtheorem{lemma}{Lemma}

\newtheorem{example}{Example}

\newtheorem{claim}{Claim}

\newtheorem{observation}{Observation}

\title{Quantum State Routing and Perfect State Transfer on Signed Graphs under Environmental Noise}
\author{Nur Mohammad Sanfui\thanks{Email: \texttt{nurmohammadsanfui@gmail.com}}, Supriyo Dutta\thanks{Email: \texttt{dosupriyo@gmail.com}}\\
	\small{Department of Mathematics}\\
	\small{National Institute of Technology Agartala}\\
	\small{Jirania, West Tripura, Tripura, India - 799046}
	}
\date{}
\begin{document}
	\maketitle

	\begin{abstract}
		Routing unknown quantum information across distributed communication networks requires autonomous, measurement-free mechanisms to prevent wave-function collapse. Szegedy quantum walks provide a mechanism for spatial state transport. The conventional walks on unweighted graphs suffer from severe back-reflection, spatial dispersion, and channel crosstalk. In this paper, we introduce a deterministic topological quantum routing architecture based on coined Szegedy quantum walks on edge-duplicated signed graphs. By treating edge signs as localized phase shifts within a balanced coin reflection, we enforce an exact zero back-scattering condition across routing nodes. We demonstrate deterministic Perfect State Transfer (PST) with unit fidelity at exact arrival times across fundamental archetypes, including the signed dumbbell switch $D_{2m,0,2n}$ and scalable glued binary trees. Furthermore, we analyze routing performance under realistic open-system amplitude and phase damping noise channels.
	\end{abstract}
	
	\textbf{Keywords:} Quantum routing, topological network, Szegedy quantum walk, Signed graph, Perfect state transfer, Quantum noise.
	
	\tableofcontents
	
	\section{Introduction}
	
		The realization of scalable, distributed quantum networks relies fundamentally on the capability to route unknown photonic quantum states between arbitrary terminal nodes without measurement-induced decoherence \cite{bose2003quantum}. In classical communications, network routers actively inspect packet headers to direct traffic across physical crossbars. In the quantum regime, however, measuring a flying qubit causes wave-function collapse and violates the no-cloning theorem, precluding any header-decoding mechanism. Consequently, quantum routers must operate as autonomous, interferometric scattering fabrics that steer quantum payloads deterministically through coherent interference alone \cite{zueco2009quantum, paganelli2013routing, dutta2023quantum}.
		
		The concept of quantum walks \cite{portugal2013quantum} has been widely explored for quantum state transfer and routing. Spin-chain state transfer was established in \cite{bose2003quantum} as a baseline for unmodulated, coherent transport. Subsequent works formulated multi-terminal routers using minimal external control \cite{zueco2009quantum}, state-transfer scheduling on network subgraphs \cite{dutta2023quantum, dutta2026perfect}, and transitions to multi-user addressable routing \cite{paganelli2013routing}. In discrete-time quantum walks (DTQWs), local coin operations have been used to direct unknown states across lattices \cite{zhan2014perfect}, while spatial recurrence and arc-reversal periodicity have provided formal frameworks for localized network dynamics \cite{segawa2011localization, higuchi2017periodicity}. Concurrently, topological waveguide circuits have demonstrated that $\pi$-phase delays and lattice edge states can suppress unwanted scattering \cite{ozawa2019topological, li2022multiport}.
		
		Despite these advances, conventional quantum walks on standard unweighted networks suffer from severe transmission bottlenecks. At multi-branching junctions with vertex degrees $d \ge 3$, the unbiased coin reflection inevitably induces significant back-scattering and dispersive spreading across intermediate nodes. This diffusion destroys directional state refocusing and leads to unacceptable channel crosstalk, preventing deterministic Perfect State Transfer (PST) across non-trivial multi-port topologies.
		
		To overcome these structural limitations, we introduce an all-optical topological routing architecture based on coined Szegedy quantum walks on edge-duplicated signed graphs $\vec{\Sigma} = (V, \vec{E}, \sigma)$ \cite{brown2012perfect, harary1953notion}. The discrete edge signatures $\sigma(e) \in \{+1, -1\}$ physically indicate an integrated optical waveguides combined with localized $\pi$-phase shifters \cite{ozawa2019topological, li2022multiport}. Under a balanced coin parameter $\alpha = 0.5$, we establish an exact zero back-scattering condition $p = 1/2$ across routing junctions. This gauge-phase mechanism enforces destructive interference against unwanted back-reflections into input channels while constructively focusing $100\%$ of the transmission amplitude along designated egress paths.
		
		Using this framework, we construct and demonstrate two fundamental routing archetypes that achieve unit-fidelity state transfer ($F = 1.0$) at discrete, topologically deterministic arrival times:
		\begin{enumerate}
			\item \textbf{The Signed Dumbbell Switch ($D_{2m,0,2n}$):} A reconfigurable crossbar switch where toggling the sign of a single central bridge link alternates the network between a coherent storage loop (memory mode, $\sigma = +1$) and non-blocking forwarding to an antipodal terminal node (transmission mode, $\sigma = -1$) with arrival latency $\tau = m + 1 + n$.
			\item \textbf{The Scalable Glued-Tree Router:} A multi-path routing backbone formed by identifying the leaves of two depth-$d$ binary trees. By engineering alternating sign layers across generational levels, the network suppresses transverse diffusion and guides distributed wavepackets across the interface, achieving ballistic refocusing at the egress root at time $\tau = 2d$. Furthermore, we demonstrate that this architecture is structurally fault-tolerant against link failures.
		\end{enumerate}
		
		Finally, we analyze routing robustness under realistic open-system decoherence models. We derive exact analytical horizons exceeding the classical simulation threshold ($F_{\text{classical}} = 2/3$): while photon loss (amplitude damping) imposes an operational horizon of $\tau_{\max} \approx 0.4055/\lambda$ hops, pure dephasing (phase damping) exhibits an intrinsic multi-path mixture floor that preserves quantum advantage across $\tau_{\max} \approx 1.0986/p$ hops—more than double the distance of dissipative loss.
	
		A few concepts from graph theory \cite{west2001introduction} are essential to this article. A graph $G=(V(G), E(G))$ is a collection of two sets, which are the set of vertices $V(G)$ and the set of edges $E(G) \subseteq V(G) \times V(G)$. Throughout the article, the words vertex and node will be used synonymously. To draw a graph, we mark the vertices by dots and edges by lines joining two dots. An edge is called an undirected edge if it is not associated with a direction. An undirected edge between the vertices $u$ and $v$ is denoted by $e=(u,v)$. An edge with a direction is called a directed edge, or arc. In this article, we denote a directed edge from vertex $u$ to $v$ by $\overrightarrow{e}=\overrightarrow{(u,v)}$. Any undirected edge $(u, v)$ can be considered as a combination of two oppositely oriented directed edges $\overrightarrow{(u, v)}$ and $\overrightarrow{(v, u)}$. A signed undirected graph is a triple $\Sigma= (V(G), E(G), \sigma_{G})$, where $G=(V(G), E(G))$ is an undirected graph and $\sigma_{G}:E(G)\to\{+,-\}$ is a sign function assigning a sign to each edge. An edge with a positive sign is called a positive edge. Similarly, the sign of a negative edge is negative. A signed graph is depicted in \autoref{signedgraph}. To plot a signed graph, we use solid lines for positive edges and dashed lines for negative edges. A path between the vertices $v_0$ and $v_n$ is a sequence of vertices and edges $v_0,e_1,v_1,e_2,v_2,\dots,e_n,v_n$ where $e_i=(v_{i-1},v_i)$ for $i=1,2,\ldots,n$, where no vertex is repeated. The number of edges in a path is called the length of the path. A graph is said to be a connected graph if there exists a path between every pair of distinct vertices. The distance between two vertices is determined by the length of the shortest path between them.
	
		The remainder of this article is organized as follows: In Section~2, we formalize the abstract quantum router and its translation to signed graphs. In Section~3, we define the coined Szegedy quantum walk on signed graphs and prove the fundamental zero back-scattering lemmas. In Section~4, we present the fault-tolerant glued-tree router. In Section~5, we construct and analyze the signed dumbbell switch. Section~6 evaluates network performance under amplitude and phase damping noise channels. Finally, Section~7 concludes the paper.

	\section{Quantum routers and signed graphs}
	
		We initiate our discussion with the concept of a quantum router \cite{wei2022towards, azuma2023quantum}. The formal definition is as follows:
		\begin{definition}
			\textbf{(Quantum Router)}: A quantum router is a quantum dynamical system consisting of an input node (or port) $S$, a set of candidate destination nodes $\{D_1, D_2, \dots, D_k\}$, and a set of classical control parameters $\Theta_j$ such that an unknown quantum state $\vert{}\psi\rangle$ initialized at $S$ is transferred deterministically to a chosen target node $D_j$ at a specific arrival time $\tau_j$ with unit fidelity:
			$$U(\Theta_j)^{\tau_j} \vert{}S\rangle = e^{i\phi_j} \vert{}D_j\rangle, \quad \text{while } \langle D_i \vert{} U(\Theta_j)^{\tau_j} \vert{}S\rangle = 0,$$
			for all $i \neq j$ and $i, j = 1, 2, \dots ,k$ without measuring the state $\vert{}\psi\rangle$ or destroying its quantum coherence during transit.
		\end{definition}
		
		Therefore, to construct a quantum router we need a Network State Space ($\mathcal{H}_{\text{net}}$) which is a finite-dimensional Hilbert space representing the spatial transmission modes of the routing fabrics. We define an input port $\ket{S} \in \mathcal{H}_{\text{net}}$, which is the designated sender mode where an arbitrary quantum payload state $\vert{}\psi_{\text{in}}\rangle = c_0\vert{}0\rangle + c_1\vert{}1\rangle$ enters the system. The destination ports ($\{\ket{D_1}, \ket{D_2}, \dots, \ket{D_K}\} \subset \mathcal{H}_{\text{net}}$) form an orthonormal set of receiver modes, that is $\braket{D_i | D_j} = \delta_{ij}$. They represent the output terminals. The control configuration space $\mathcal{C}$ is a set of classical, reconfigurable control vectors $c = (c_1, c_2, \dots, c_M)$. Moreover, a discrete-time unitary evolution operator $U(\mathbf{c})$ acting on $\mathcal{H}_{\text{net}}$, parameterized by the control setting $\mathbf{c}$, governs state transmission.
		
		The system $\mathcal{R}$ is a deterministic quantum router if, for every target destination port $D_k$, there exists a specific control setting $\mathbf{c}^{(k)} \in \mathcal{C}$ and an integer transmission or arrival time $\tau_k \in \mathbb{N}$, such that the following three axioms hold:
		\begin{enumerate}
			\item 
				\textbf{Deterministic forwarding:} Deterministic forwarding or unit transfer of the quantum state is governed by $U(c^{(k)})^{\tau_k} \vert{}S\rangle = e^{i \theta_k} \vert{}D_k\rangle.$ It yields the  state-transfer fidelity $F_{S \to D_k}(\tau_k; c^{(k)}) = \vert{}\langle D_k \vert{} U(c^{(k)})^{\tau_k} \vert{}S\rangle\vert{}^2 = 1.0$. 
			\item 
				\textbf{Zero leakage:} The quantum state never leaks into unintended output ports. The condition for crosstalk or leakage suppression is determined by $\vert{}\langle D_j \vert{} U(c^{(k)})^{\tau_k} \vert{}S\rangle\vert{}^2 = 0$ for all $j \ne k$. 
			\item 
				\textbf{Coherence Preservation:} As the map $U(c^{(k)})^{\tau_k}$ is unitary, any arbitrary input superposition is delivered intact without measurement collapse, that is 
				$U(c^{(k)})^{\tau_k} \big(a \vert{}S\rangle \otimes \vert{}0\rangle + b \vert{}S\rangle \otimes \vert{}1\rangle \big) = e^{i\theta_k} \vert{}D_k\rangle \otimes \big(a \vert{}0\rangle + b \vert{}1\rangle \big).$
		\end{enumerate}
		
		Mathematically, we can translate the idea of quantum router to a signed graph. The input ports, destination ports, and intermediate junction or routing nodes of the network are represented by the vertices. The edges represent directed communication channels or physical waveguides connecting the vertices, duplicated as pairs of opposing directed arcs $\overrightarrow{(u, v)}$ and $\overrightarrow{(v, u)}$ for each link. In an integrated photonic waveguide circuit, a positive edge $\sigma = +1$ is a standard optical waveguide channel where light traverses without an induced relative phase change. A negative edge $\sigma = -1$ represents an optical channel with a localized $\pi$-phase shift $e^{i\pi} = -1$. In hardware, this is implemented using an electro-optic or thermo-optic phase modulator. Toggling the sign from $+1$ to $-1$ simply requires applying a voltage to introduce a $\pi$ phase delay. This phase shift directs the quantum interference pattern to forward the state along the target port rather than trapping it in a circulating loop. Therefore, a positive edge represents a normal, unmodified transmission link and a negative edge represents a localized $\pi$-phase shift that controls destructive interference and suppresses crosstalk.
		
		A topological router is a quantum routing architecture. Here, the direction, destination, and non-blocking delivery of a quantum state are determined by the global structural layout or topology and localized phase signatures of the network. Intermediate projective measurements and classical packet inspections are completely avoided. In conventional communication, a router reads an incoming packet's header and switches physical channels accordingly. In quantum networks, measuring a quantum packet collapses the wave-function and destroys quantum coherence. A topological router solves this by treating state routing as an interference-driven scattering process on an underlying graph. The information packet, for example a single photon, propagates as a coherent wavepacket via a quantum walk across the network's spatial modes.  By engineering the structural properties of graphs and the discrete gauge phases or signed edges, the router creates deterministic constructive interference toward the desired output port and complete destructive cancellation toward all unintended paths.  
		
		For simplicity, in all the quantum routers under our consideration, we assume that there is one input node $S$ and only one destination node or receiver node $R$. We consider a single control parameter $\alpha$ parameterizing the unitary operators that govern state transitions. The Szegedy quantum walks play a key role in information transition. We discuss it in the next section.

	\section{Szegedy quantum walk on signed graphs and state transfer}
	
		We begin with a few graph theoretic ideas which are essential in defining the Szegedy quantum walks \cite{szegedy2004quantum, szegedy2004spectra} on signed graphs. Let $e = (u, v)$ be an undirected edge. The vertices $u$ and $v$ are the endpoints of $e$. We say $v$ is a neighbor of $u$ or $v$ is adjacent to $u$. We denote the set of all neighbors of $u$ by $N(u)$. The degree of the vertex $u$ is $d(u) = |N(u)|$.
		
		Let $\overrightarrow{e} = \overrightarrow{(u, v)}$ be a directed edge. The vertex $u$ is called the origin of edge $\overrightarrow{e}$, and we denote $o(\overrightarrow{e}) = u$. Also, the vertex $v$ is called the terminal of $\overrightarrow{e}$ which is denoted by $t(\overrightarrow{e}) = v$. The inverse of a directed edge $\overrightarrow{e} = \overrightarrow{(u,v)}$ is denoted by $\overrightarrow{e}^{-1}$ and is defined by $\overrightarrow{e}^{-1}=\overrightarrow{(v,u)}$. Hence, $o(\overrightarrow{e}^{-1})= t(\overrightarrow{e})$ and $t(\overrightarrow{e}^{-1})= o(\overrightarrow{e})$.
		
		A signed directed graph, or signed digraph is an ordered triple $\overrightarrow{\Sigma} = (V(\overrightarrow{G}), E(\overrightarrow{G}), \sigma_{\overrightarrow{G}})$, where $\overrightarrow{G} = (V(\overrightarrow{G}), E(\overrightarrow{G}))$ is a directed graph and $\sigma_{\overrightarrow{G}}:E(\overrightarrow{G})\to\{+,-\}$ is a sign function assigning a sign to each edge. Given a signed directed graph $\overrightarrow{\Sigma}=(V(\overrightarrow{G}),E(\overrightarrow{G}),\sigma_{\overrightarrow{G}})$, the set of all outgoing edges from $u$ is denoted by $O_u = \{\overrightarrow{e} : o(\overrightarrow{e})=u\}$. The positive and negative out-degree of a vertex $u$ is the number of positive and negative outgoing edges from $u$ which are denoted by $d_{\mathrm{o}}^{+}(u)$ and $d_{\mathrm{o}}^{-}(u)$, respectively. Mathematically,
		\begin{equation}\label{outpositivedegree}
			d_{\mathrm{o}}^{+}(u)
			=
			\left|\left\{
			\overrightarrow{(u,v)}\in 	E(\overrightarrow{G}) :
			\sigma\overrightarrow{(u,v)}=+
			\right\}\right|,
		\end{equation}   
		and
		\begin{equation}\label{outnegativedegree}
			d_{\mathrm{o}}^{-}(u)
			=
			\left|\left\{
			\overrightarrow{(u,v)}\in 	E(\overrightarrow{G}) :
			\sigma\overrightarrow{(u,v)}=-
			\right\}\right|.
		\end{equation}
	
		To define the Szegedy quantum walks, we convert a signed undirected graph $\Sigma = (V(G), E(G), \sigma_{G})$ to a signed directed graph $\overrightarrow{\Sigma} = (V(\overrightarrow{G}), E(\overrightarrow{G}), \sigma_{\overrightarrow{G}})$, where $V(\overrightarrow{G}) = V(G)$ and for every undirected edge $(u,v)$ in $\Sigma$, we construct two oppositely oriented arcs $\overrightarrow{(u,v)}$ and $\overrightarrow{(v,u)}$ in $\overrightarrow{\Sigma}$ keeping the sign of edges unchanged. Clearly, $|E(\overrightarrow{G})| = 2 |E(G)|$. In \autoref{graph_and_its_duplication}, we depict an undirected signed graph and the directed signed graph after the duplication process.
	
		Suppose a wavepacket is located at vertex $v$. In the next step of quantum walks it will move to a neighboring vertex following the edges. Let the probability of selecting any edge of same sign is equal. Also we assume that the probabilities of choosing positive sign and negative sign are $\alpha$ and $(1 - \alpha)$, respectively.  Therefore, the probability of selecting an edge $\overrightarrow{e}$ with $o(\overrightarrow{e}) = v$ is 
		\begin{equation}\label{prob}
			p(\overrightarrow{e})=
			\begin{cases}
				\dfrac{1}{d_{\mathrm{o}}^{+}(v)}, & 	\text{if } d_{\mathrm{o}}^{-}(v)=0; \\
				\dfrac{1}{d_{\mathrm{o}}^{-}(v)}, & 	\text{if } d_{\mathrm{o}}^{+}(v)=0;\\
				\dfrac{\alpha}{d_{\mathrm{o}}^{+}(v)}, & 	\text{if } d_{\mathrm{o}}^{+}(v)\neq 0,\  d_{\mathrm{o}}^{-}(v)\neq 0 \  \text{and} \  \sigma(\overrightarrow{e}) = +; \\
				\dfrac{1-\alpha}{d_{\mathrm{o}}^{-}(v)}, 	& \text{if } d_{\mathrm{o}}^{+}(v)\neq 0,\  d_{\mathrm{o}}^{-}(v)\neq 0 \  \text{and} \  \sigma(\overrightarrow{e}) = -.
			\end{cases}
		\end{equation}
	
		We label the vertices in $\overrightarrow{G}$ with $0, 1, 2, \dots, (|V(G)|-1)$ and arrange the directed edges in dictionary order on their end vertices as  $\overrightarrow{e_0}, \overrightarrow{e_1}, \overrightarrow{e_2}, \dots, \overrightarrow{e_{(2|E(G)|-1)}}$. Corresponding to an edge $\overrightarrow{e_i}$, we assign a state vector 
	\begin{equation}
		\ket{\overrightarrow{e_i}} = (0,0, \dots,1(i\text{-th position}),0, \dots,0)^\dagger
	\end{equation} 
	for $i = 0, 1, 2, \dots, (2|E(G)|-1)$. The corresponding stochastic matrix of random walk is $M = (m_{ij})_{|V(G)|\times |V(G)|}$, where
	\begin{equation}
		m_{ij} = 
		\begin{cases}
			p(\overrightarrow{(j,i)}), & \text{if}\  \overrightarrow{(j, i)} \in E(\overrightarrow{G}); \\
			0, & \text{otherwise},
		\end{cases}
	\end{equation}
	where $p(\overrightarrow{(j,i)})$ is defined in equation \eqref{prob}.	
	\begin{figure}
		\centering
		\begin{subfigure}[t]{0.3\textwidth}
			\centering
			\begin{tikzpicture}[x=1.9cm,y=1.9cm]
				
				\draw[fill] (0,1) circle[radius=0.08cm];
				\node[above] at (0,1) {$0$};
				
				\draw[fill] (-1,0) circle[radius=0.08cm];
				\node[left] at (-1,0) {$1$};
				
				\draw[fill] (0,0) circle[radius=0.08cm];
				\node[right] at (0,0) {$2$};
				
				\draw[fill] (1,0) circle[radius=0.08cm];
				\node[right] at (1,0) {$3$};
				
				\draw[fill] (0,-1) circle[radius=0.08cm];
				\node[below] at (0,-1) {$4$};
				
				\draw (0,1)--(-1,0);
				\draw (-1,0)--(0,-1);
				\draw (1,0)--(0,-1);
				
				\draw[dashed] (0,1)--(0,0);
				\draw[dashed] (0,1)--(1,0);
				\draw[dashed] (0,0)--(0,-1);
				
			\end{tikzpicture}
			\caption{Here is a signed undirected graph with five vertices with three positive and three negative edges. The positive edges are drawn by solid lines and the negative edges are drawn with dashed line.}
			\label{signedgraph}
		\end{subfigure}
		\hspace{.5cm}
		\begin{subfigure}[t]{0.3\textwidth}
			\centering
			\begin{tikzpicture}[
				x=1.9cm,y=1.9cm,
				midpositive/.style={
					postaction={decorate},
					decoration={markings, mark=at position 0.5 with {\arrow{>}}}
				},
				midnegative/.style={
					dashed,
					postaction={decorate},
					decoration={markings, mark=at position 0.5 with {\arrow{>}}}
				}
				]
				
				\draw[fill] (0,1) circle[radius=0.08cm];
				\node[above] at (0,1) {$0$};
				
				\draw[fill] (-1,0) circle[radius=0.08cm];
				\node[left] at (-1,0) {$1$};
				
				\draw[fill] (0,0) circle[radius=0.08cm];
				\node[right] at (0,0) {$2$};
				
				\draw[fill] (1,0) circle[radius=0.08cm];
				\node[right] at (1,0) {$3$};
				
				\draw[fill] (0,-1) circle[radius=0.08cm];
				\node[below] at (0,-1) {$4$};
				
				\draw[midpositive,bend left=15] (0,1) to (-1,0);
				\draw[midpositive,bend left=15] (-1,0) to (0,1);
				
				\draw[midpositive,bend left=15] (-1,0) to (0,-1);
				\draw[midpositive,bend left=15] (0,-1) to (-1,0);
				
				\draw[midpositive,bend left=15] (1,0) to (0,-1);
				\draw[midpositive,bend left=15] (0,-1) to (1,0);
				
				\draw[midnegative,bend left=15] (0,1) to (0,0);
				\draw[midnegative,bend left=15] (0,0) to (0,1);
				
				\draw[midnegative,bend left=15] (0,1) to (1,0);
				\draw[midnegative,bend left=15] (1,0) to (0,1);
				
				\draw[midnegative,bend left=15] (0,0) to (0,-1);
				\draw[midnegative,bend left=15] (0,-1) to (0,0);
				
			\end{tikzpicture}
			\caption{We apply two opposite orientations on every edge of the graph in \autoref{signedgraph}. Corresponding to a positive edge we have two positive directed edges. Also, for any negative edge we get two negative directed edges.}
		\end{subfigure}
		\hspace{.5cm}
		\begin{subfigure}[t]{0.3\textwidth}
			\centering
			\begin{tikzpicture}[
				x=1.9cm,y=1.9cm,
				midpositive/.style={
					postaction={decorate},
					decoration={markings, mark=at position 0.5 with {\arrow{>}}}
				},
				midnegative/.style={
					dashed,
					postaction={decorate},
					decoration={markings, mark=at position 0.5 with {\arrow{>}}}
				},
				plabel/.style={
					font=\fontsize{6}{7}\selectfont,
					sloped, above, pos=0.5, inner sep=1pt
				}
				]
				
				\draw[fill] (0,1) circle[radius=0.08cm];
				\node[above] at (0,1) {$0$};
				
				\draw[fill] (-1,0) circle[radius=0.08cm];
				\node[left] at (-1,0) {$1$};
				
				\draw[fill] (0,0) circle[radius=0.08cm];
				\node[right] at (0,0) {$2$};
				
				\draw[fill] (1,0) circle[radius=0.08cm];
				\node[right] at (1,0) {$3$};
				
				\draw[fill] (0,-1) circle[radius=0.08cm];
				\node[below] at (0,-1) {$4$};
				
				\draw[midpositive,bend left=20] (0,1)  to node[plabel]{$0.5$}  (-1,0);
				\draw[midpositive,bend left=20] (-1,0) to node[plabel]{$0.5$}  (0,1);
				\draw[midpositive,bend left=20] (-1,0) to node[plabel]{$0.5$}  (0,-1);
				\draw[midpositive,bend left=20] (0,-1) to node[plabel]{$0.25$} (-1,0);
				\draw[midpositive,bend left=20] (1,0)  to node[plabel]{$0.5$}  (0,-1);
				\draw[midpositive,bend left=20] (0,-1) to node[plabel]{$0.25$} (1,0);
				
				\draw[midnegative,bend left=20] (0,1)  to node[plabel]{$0.25$} (0,0);
				\draw[midnegative,bend left=20] (0,0)  to node[plabel]{$0.5$}  (0,1);
				\draw[midnegative,bend left=20] (0,1)  to node[plabel]{$0.25$} (1,0);
				\draw[midnegative,bend left=20] (1,0)  to node[plabel]{$0.5$}  (0,1);
				\draw[midnegative,bend left=20] (0,0)  to node[plabel]{$0.5$}  (0,-1);
				\draw[midnegative,bend left=20] (0,-1) to node[plabel]{$0.5$}  (0,0);
				
			\end{tikzpicture}
			\caption{Here we show the edge duplicated graph with probability on edges beside the arrow, when $\alpha = 0.5$.}
		\end{subfigure}
		\caption{A signed undirected graph and a signed directed graph generated by assigning two opposite orientations on every edge.}
		\label{graph_and_its_duplication}
	\end{figure}
	
	We define an incidence vector \cite{segawa2011localization} $a_u$ corresponding to vertex $u$ which is $a_u=\sum_{\overrightarrow{e} \in O_u}\sqrt{p(\overrightarrow{e})}\ket{\overrightarrow{e}}$.
	The Hilbert space $\mathcal{H}$ of the quantum walk is $\mathcal{H} = \operatorname{span}\{\ket{\overrightarrow{e}}:\overrightarrow{e} \in E(\overrightarrow{G})\}.$ The evolution matrix of the quantum walk, $U_\alpha=SC_\alpha$ where $S$ is the flip-flop shift operator $S\ket{\overrightarrow{(u,v)}}=\ket{\overrightarrow{(v,u)}}$ for all $\overrightarrow{(u,v)} \in E(\overrightarrow{G})$ and 
	\begin{equation}\label{coin}
		C_\alpha=2\left(\sum_{u}a_ua_u^\dagger\right)-I_{|E(\overrightarrow{G})|}.
	\end{equation}
	is the coin operator. The evolution matrix \cite{higuchi2017periodicity}, $U_\alpha$ is of order $2|E(G)| \times 2|E(G)|$ which is indexed by the arcs. 
	\begin{equation}\label{szegedy matrix}
		(U_\alpha)_{\overrightarrow{f},\overrightarrow{e}}=
		\begin{cases}
			2\sqrt{p(\overrightarrow{e})}\sqrt{p(\overrightarrow{f}^{-1})}-\delta_{\overrightarrow{e}^{-1},\overrightarrow{f}},
			& \text{if } t(\overrightarrow{f})=o(\overrightarrow{e}), \\
			0, & \text{otherwise}.
		\end{cases}
	\end{equation}
	Let $\ket{\overrightarrow{e}}$ be a vector associated with an edge $\overrightarrow{e}$, then
	\begin{equation}
		U_\alpha\ket{\overrightarrow{e}}=  \sum_{\substack{\overrightarrow{f}: t(\overrightarrow{f})=o(\overrightarrow{e})}} (U_\alpha)_{\overrightarrow{f},\overrightarrow{e}} \ket{\overrightarrow{f}}
		\label{evolution}
	\end{equation}
	
	Note that the quantum walk operator $(U_\alpha)$ depends on the parameter $\alpha$. It breaks down the quantum walk dynamics into two physical stages. The coin reflection $C$ occurs locally inside each router junction or vertex $u$ to redistribute the probability amplitudes among its outgoing physical channels. The shift operator $S$ transfers the packets across the physical waveguide link to the adjacent junctions.

		Now we are in a position to define the perfect state transfer between two vertices $u$ and $v$ in a graph \cite{bose2003quantum, christandl2004perfect, chapman2016experimental}.	
		\begin{definition}\label{pst}
			\textbf{(Perfect state transfer)} There is a perfect state transfer from vertex $u$ to vertex $v$ at time $t$ if there exists a complex scalar $\lambda$ with $|\lambda|=1$, such that $(U_\alpha)^t \ket{u} = \lambda \ket{v}$, where $\ket{u}=\frac{1}{\sqrt{|O_u|}}\sum_{\overrightarrow{e} \in O_u}\ket{\overrightarrow{e}}$ and $\ket{v}$ is defined similarly.
		\end{definition}
		Clearly, if there is a PST from $u$ to $v$, then the fidelity between $(U_\alpha)^t \ket{u}$ and $\ket{v}$ at time $t$ with respect to the parameter $\alpha$ becomes
		\begin{equation}
			F_{u\rightarrow v}(t;\alpha) = |\braket{v | (U_\alpha)^t | u}|^2 = |\lambda|^2 = 1,
		\end{equation}
		which we also consider an alternative definition of PST in our numerical experiments. 
		
		Perfect State Transfer is crucial in quantum routing. It guarantees the deterministic transport of an arbitrary quantum payload to a designated target port with unit fidelity $F = 1.0$ at a discrete arrival time determined by the graph topology. Also the transmission in PST is completely measurement-free. Therefore the routing of quantum packets via PST does not collapse the quantum superposition or destroy entanglement. 
		
		Now we have the following lemmas, which are useful in the next sections to prove PST between vertices in a signed graph.

		\begin{lemma}\label{half lemma}
			Let $\Sigma= (V(G), E(G), \sigma_{G})$ be a signed graph and $\overrightarrow{e} = \overrightarrow{(u, v)} \in E(\overrightarrow{G})$. Then $p(\overrightarrow{e})$ = $\frac{1}{2}$ if and only if $(U_\alpha)_{\overrightarrow{e}^{-1}, \overrightarrow{e}} = 0$.
		\end{lemma}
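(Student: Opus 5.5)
The plan is to specialize the explicit matrix formula \eqref{szegedy matrix} to the entry indexed by $\overrightarrow{f} = \overrightarrow{e}^{-1}$ and observe that it collapses to a single scalar expression in $p(\overrightarrow{e})$.

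\textbf{Step 1: the formula applies.} First check that the entry falls into the nonzero case of \eqref{szegedy matrix}. With $\overrightarrow{e} = \overrightarrow{(u,v)}$ we have $\overrightarrow{f} = \overrightarrow{e}^{-1} = \overrightarrow{(v,u)}$. Then $t(\overrightarrow{f}) = u = o(\overrightarrow{e})$, so the first case of the formula is the relevant one.

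\textbf{Step 2: simplify the entry.} Since $\overrightarrow{f}^{-1} = (\overrightarrow{e}^{-1})^{-1} = \overrightarrow{e}$, we get $\sqrt{p(\overrightarrow{f}^{-1})} = \sqrt{p(\overrightarrow{e})}$. The Kronecker delta $\delta_{\overrightarrow{e}^{-1},\overrightarrow{f}}$ equals $1$. Hence
\begin{equation*}
(U_\alpha)_{\overrightarrow{e}^{-1},\overrightarrow{e}} = 2\sqrt{p(\overrightarrow{e})}\sqrt{p(\overrightarrow{e})} - 1 = 2p(\overrightarrow{e}) - 1,
\end{equation*}
using that $p(\overrightarrow{e}) \ge 0$, so $\sqrt{p}^{\,2} = p$. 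This vanishes if and only if $p(\overrightarrow{e}) = \tfrac12$, which proves both directions simultaneously.

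\textbf{Main obstacle.} The only real issue is confirming that \eqref{szegedy matrix} is consistent with $U_\alpha = SC_\alpha$ and the coin \eqref{coin}, since the lemma relies on that formula. I would verify this directly. Because $S$ is the flip-flop shift, $\langle \overrightarrow{f}|SC_\alpha|\overrightarrow{e}\rangle = \langle \overrightarrow{f}^{-1}|C_\alpha|\overrightarrow{e}\rangle$. Writing out $C_\alpha = 2\sum_w a_w a_w^\dagger - I$, the only term of the sum that contributes is $w = o(\overrightarrow{e})$, and only when $o(\overrightarrow{f}^{-1}) = o(\overrightarrow{e})$, that is, when $t(\overrightarrow{f}) = o(\overrightarrow{e})$. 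That term gives $2\sqrt{p(\overrightarrow{f}^{-1})p(\overrightarrow{e})}$, and the identity contributes $-\delta_{\overrightarrow{f}^{-1},\overrightarrow{e}} = -\delta_{\overrightarrow{e}^{-1},\overrightarrow{f}}$. For the diagonal-of-reversal entry specifically, one can simply compute $\langle \overrightarrow{e}|C_\alpha|\overrightarrow{e}\rangle = 2p(\overrightarrow{e}) - 1$, because $a_u$ has coefficient $\sqrt{p(\overrightarrow{e})}$ on $\ket{\overrightarrow{e}}$. This shortcut already suffices.
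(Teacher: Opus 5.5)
Your proof is correct and follows the paper's argument. Specialize \eqref{szegedy matrix} to $\overrightarrow{f}=\overrightarrow{e}^{-1}$, where $t(\overrightarrow{f})=o(\overrightarrow{e})$, $\overrightarrow{f}^{-1}=\overrightarrow{e}$ and the Kronecker delta equals $1$, to get $(U_\alpha)_{\overrightarrow{e}^{-1},\overrightarrow{e}}=2p(\overrightarrow{e})-1$, which vanishes exactly when $p(\overrightarrow{e})=\tfrac12$. Your extra check that this entry equals $\langle\overrightarrow{e}|C_\alpha|\overrightarrow{e}\rangle$ under $U_\alpha=SC_\alpha$ is a sound justification of the matrix formula, which the paper takes for granted.
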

		
		\begin{proof}
			Note that $(U_\alpha)_{\overrightarrow{e}^{-1},\overrightarrow{e}} = (U_\alpha)_{\overrightarrow{(v,u)},\overrightarrow{(u,v)}} = 2 \sqrt{p(\overrightarrow{(u,v)})} \sqrt{p(\overrightarrow{(v,u)}^{-1})} - \delta_{\overrightarrow{(u,v)}^{-1}, \overrightarrow{(v,u)}}$ from equation\eqref{szegedy matrix}. Now, $(U_\alpha)_{\overrightarrow{e}} = 0$ if and only if $2 \sqrt{p(\overrightarrow{(u,v)})} \sqrt{p(\overrightarrow{(u,v)})} = \delta_{\overrightarrow{(v,u)}, \overrightarrow{(v,u)}}$. Here, $\delta_{\overrightarrow{(v,u)}, \overrightarrow{(v,u)}} = 1$ if and only if $\overrightarrow{(u,v)} = \overrightarrow{(v,u)}$. Applying it we get $p(\overrightarrow{e})=p(\overrightarrow{(u,v)}) = \frac{1}{2}$. Hence, the proof.
		\end{proof}
	
	\begin{lemma}\label{backward moving}
		Let $\Sigma= (V(G), E(G), \sigma_{G})$ be a signed graph and $v\in V(G)$ has only two neighbors $u$ and $w$, then $U_{0.5}\ket{\overrightarrow{(v,w)}}=\ket{\overrightarrow{(u,v)}}.$ 
	\end{lemma}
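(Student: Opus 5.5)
The plan is to compute $U_{0.5}\ket{\overrightarrow{(v,w)}}$ directly from the expansion \eqref{evolution}, using the matrix entries \eqref{szegedy matrix}. Set $\overrightarrow{e}=\overrightarrow{(v,w)}$, so $o(\overrightarrow{e})=v$. The sum in \eqref{evolution} then runs over the arcs $\overrightarrow{f}$ with $t(\overrightarrow{f})=v$. Since $v$ has exactly the two neighbours $u$ and $w$, there are exactly two such arcs, $\overrightarrow{(u,v)}$ and $\overrightarrow{(w,v)}=\overrightarrow{e}^{-1}$. Hence
\[
U_{0.5}\ket{\overrightarrow{(v,w)}}=(U_{0.5})_{\overrightarrow{(u,v)},\overrightarrow{(v,w)}}\ket{\overrightarrow{(u,v)}}+(U_{0.5})_{\overrightarrow{(w,v)},\overrightarrow{(v,w)}}\ket{\overrightarrow{(w,v)}},
\]
and it suffices to show that the first coefficient is $1$ and the second is $0$.

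The key intermediate step is to show that $p(\overrightarrow{(v,w)})=p(\overrightarrow{(v,u)})=\tfrac12$ when $\alpha=0.5$. I will split into cases according to the signs of the two outgoing arcs at $v$, using the case definition \eqref{prob}.
\begin{enumerate}
\item If both arcs are positive, then $d_{\mathrm{o}}^{-}(v)=0$ and $d_{\mathrm{o}}^{+}(v)=2$, so each arc has probability $1/2$.
\item If both arcs are negative, the same holds with the roles of the signs exchanged.
\item If the signs differ, then $d_{\mathrm{o}}^{+}(v)=d_{\mathrm{o}}^{-}(v)=1$. The positive arc gets $\alpha=0.5$ and the negative arc gets $1-\alpha=0.5$.
\end{enumerate}
The mixed-sign case is the only place where the hypothesis $\alpha=0.5$ is used. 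It is also the main point to handle carefully, because \eqref{prob} switches regimes depending on whether both signs are present at $v$.

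With this in hand, the vanishing of the back-scattering coefficient $(U_{0.5})_{\overrightarrow{e}^{-1},\overrightarrow{e}}$ follows immediately from Lemma~\ref{half lemma}, since $p(\overrightarrow{e})=\tfrac12$. Equivalently, \eqref{szegedy matrix} gives $2\cdot\tfrac12-1=0$.

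For the transmission coefficient, take $\overrightarrow{f}=\overrightarrow{(u,v)}$, so that $\overrightarrow{f}^{-1}=\overrightarrow{(v,u)}$. Since $\overrightarrow{f}\neq\overrightarrow{e}^{-1}$, the Kronecker delta in \eqref{szegedy matrix} vanishes. The entry therefore equals $2\sqrt{p(\overrightarrow{(v,w)})}\sqrt{p(\overrightarrow{(v,u)})}=2\cdot\tfrac12=1$. Substituting both coefficients into the expansion yields $U_{0.5}\ket{\overrightarrow{(v,w)}}=\ket{\overrightarrow{(u,v)}}$, as claimed.
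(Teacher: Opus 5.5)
Your proposal is correct and follows essentially the same route as the paper. Both arguments expand $U_{0.5}\ket{\overrightarrow{(v,w)}}$ over the two arcs terminating at $v$, show $p(\overrightarrow{(v,u)})=p(\overrightarrow{(v,w)})=\tfrac12$ at $\alpha=0.5$, kill the back-scattering entry via Lemma~\ref{half lemma}, and evaluate the transmission entry as $2\sqrt{\tfrac12}\sqrt{\tfrac12}=1$; your explicit three-case sign analysis (noting that $\alpha=0.5$ matters only in the mixed-sign case) just spells out the paper's one-line appeal to equation~\eqref{prob}.
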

	
		\begin{proof}
			As two edges are incident to $v$, there may be four different combinations of $+$ and $-$ signs on them. 
			Also,  $O_v=\{\overrightarrow{(v,u)},\overrightarrow{(v,w)}\}$ and 
			\begin{equation}\label{two arc}
				\{\overrightarrow{f}\in E(\overrightarrow{G}): t(\overrightarrow{f})=v\}=\{\overrightarrow{(u,v)},\overrightarrow{(w,v)}\}.
			\end{equation}
			Applying equation (\ref{two arc}) and  $\alpha = 0.5$ in equation (\ref{prob}), we get $p(\overrightarrow{(v,u)})=p(\overrightarrow{(v,w)})=\frac{1}{2}$. Now, applying equation\eqref{evolution}, we get 
			$$U_{0.5}\ket{\overrightarrow{(v,w)}} = \sum_{\substack{\overrightarrow{f}: t(\overrightarrow{f})=o(\overrightarrow{(v,w)})}} (U_{0.5})_{\overrightarrow{f},\overrightarrow{(v,w)}} \ket{\overrightarrow{f}} = \sum_{\substack{\overrightarrow{f}: t(\overrightarrow{f})=v}} (U_{0.5})_{\overrightarrow{f},\overrightarrow{(v,w)}} \ket{\overrightarrow{f}}.$$
			From equation\eqref{two arc} we get 
			\[
			\begin{aligned}
				U_{0.5}\ket{\overrightarrow{(v,w)}}&=(U_{0.5})_{\overrightarrow{(u,v)},\overrightarrow{(v,w)}} \ket{\overrightarrow{(u,v)}} + (U_{0.5})_{\overrightarrow{(w,v)},\overrightarrow{(v,w)}} \ket{\overrightarrow{(w,v)}}  \\
				&=(2\sqrt{p(\overrightarrow{v,w})}\sqrt{p(\overrightarrow{(u,v)}^{-1}}-0)\ket{\overrightarrow{(u,v)}} + 0\times \ket{\overrightarrow{(w,v)}}	~\text{(from equation\eqref{szegedy matrix} and lemma~\eqref{half lemma})} \\	
				&=(2\sqrt{p(\overrightarrow{v,w})}\sqrt{p(\overrightarrow{(v,u)}})\ket{\overrightarrow{(u,v)}} =(2\sqrt{\frac{1}{2}}\sqrt{\frac{1}{2}})\ket{\overrightarrow{u,v}} =\ket{\overrightarrow{u,v}}.
			\end{aligned}
			\]  
			Hence, the proof.
		\end{proof}

	\section{Construction of a fault-tolerant quantum router}
		
		In this section, we discuss a quantum router whose topology mathematically depends on the glued trees. In a graph, a cycle is a closed path in which the first and last vertices coincide and no other vertex is repeated. A acyclic connected graph is called a tree. A rooted tree is a tree in which one vertex is designated as the root. Let $T = (V(T), E(T))$ be a rooted tree with root $r$. If $d(u)=1$ for some $u\in V(G)\setminus{r}$, then $u$ is called a leaf of $T$. The set of all leaves of $T$ is denoted by $L(T)$. In a rooted tree with root $r$, the depth of a vertex $v$ is denoted by $depth(v)$ which is the length of the path from $r$ to $v$. Therefore, $depth(r)=0$. For an edge $e=(u,v)$ where $u$ is the parent of $v$, the level of $e$ is $\ell(e)$ = $depth(v)$. Thus, the edges incident to $r$ have level $1$, and $\ell$ increases by $1$ at each step away from the root.
		
		Now we define two constructions of glued trees. 
		\begin{definition}\label{rooted tree definition}
			\textbf{(Rooted glued signed tree by leaf identification)} Let $\Sigma=(V(T),E(T),\sigma_T)$ be a rooted signed tree with root $r$ and leaf set $L(T)$. Consider two disjoint copies of $\Sigma$, denoted by
			
			$$
			\Sigma_1=(V(T_1),E(T_1),\sigma_{T_1})
			\quad\text{and}\quad
			\Sigma_2=(V(T_2),E(T_2),\sigma_{T_2}),
			$$
			with roots $r_1$ and $r_2$, respectively. A rooted glued signed tree by leaf identification is the signed graph obtained by identifying each leaf of $\Sigma_1$ with its corresponding leaf in $\Sigma_2$. The roots of the new graph are $r_1$ and $r_2$.
		\end{definition}
		
		\begin{claim}
			Let $T$ be a rooted tree such that 
			\begin{enumerate}
				\item Distance of every leaf from the root is equal.
				\item Edges at the same level have the same sign.
				\item Edges at every level alternates its sign from previous level.
			\end{enumerate}
			Then the glued tree generated by $T$ admits PST between its roots.
		\end{claim}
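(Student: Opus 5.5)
The plan is to track the walk $U_{0.5}$ explicitly, level by level, starting from $\ket{r_1}$. The walk should sweep through $\Sigma_1$ without reflection, pass across the glued leaves, and refocus inside $\Sigma_2$ onto $\ket{r_2}$ after $2h$ steps, where $h$ is the common depth of the leaves (hypothesis~1); this is consistent with $\tau = 2d$ announced in the introduction. I will take $\alpha = 0.5$ throughout, and I may also have to absorb one extra step, depending on whether $\ket{r_2}$ is reached on outgoing or incoming arcs in the paper's convention for \eqref{evolution}.

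\textbf{Step 1: the probabilities \eqref{prob}.} By hypotheses~2 and~3, every internal vertex $c \ne r_1, r_2$ at depth $j$ with $k_c$ children has one parent edge of level $j$ and $k_c$ child edges of level $j+1$, and these carry opposite signs. So at $\alpha = 0.5$ the parent arc from $c$ has $p = 1/2$, and each child arc from $c$ has $p = 1/(2k_c)$. At a root, all edges have the same sign, so each arc has $p = 1/d(r)$. A glued leaf has exactly two neighbours, its parents in $\Sigma_1$ and $\Sigma_2$, joined by level-$h$ edges of equal sign; so each of its two arcs has $p = 1/2$.

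\textbf{Step 2: the descent through $\Sigma_1$, by induction on depth.} The claim is that after $j$ steps the state is supported on the arcs between depths $j-1$ and $j$, moving away from $r_1$. The amplitude on each such arc should be the product of the factors $1/\sqrt{d(r_1)}$ and $1/\sqrt{k}$ collected along its root path. For the inductive step I apply \eqref{szegedy matrix} at an internal vertex. The reflection entry is $2p - 1 = 0$ by Lemma~\ref{half lemma}, since the parent arc has $p = 1/2$. The transmission entry into each child arc is $2\sqrt{1/2}\sqrt{1/(2k)} = 1/\sqrt{k}$. At the leaves, Lemma~\ref{backward moving} transfers the amplitude unchanged into $\Sigma_2$.

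\textbf{Step 3: the ascent through $\Sigma_2$.} I will use the mirror symmetry between $\Sigma_1$ and $\Sigma_2$: the amplitude reaching a $\Sigma_2$-vertex $u'$ from a child $c'$ equals the amplitude that left the mirror vertex $u$ toward $c$ in $\Sigma_1$. That amplitude is $A(u)/\sqrt{k_u}$, which is the same for all $k_u$ children. So the incoming vector at $u'$ is proportional to $\sum_{c} \sqrt{1/(2k_u)}\,\ket{\cdot}$, which is exactly the child part of the incidence vector $a_{u'}$. Using \eqref{szegedy matrix}, each child arc then contributes
\[
-(1-1/k_u) + (k_u-1)/k_u = 0
\]
back to the children, and the combined contribution $k_u \cdot (A/\sqrt{k_u}) \cdot (1/\sqrt{k_u}) = A(u)$ goes to the parent arc. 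This reproduces the descent in reverse.

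\textbf{Step 4: the final step at $r_2$.} Here all $d(r_2)$ incoming amplitudes are equal to $1/\sqrt{d(r_2)}$, so the state is $\lambda \ket{r_2}$ with $|\lambda| = 1$. Unitarity then gives $F = 1$.

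The main obstacle is Step~3. The argument needs equal amplitudes across siblings on the way up, and I will justify this through the mirror symmetry of the leaf identification, which holds for arbitrary branching numbers because hypothesis~1 makes all root-to-root paths have length $2h$. I must also verify that amplitudes arriving at $u'$ through different leaves do not interfere destructively. This should follow because every root-to-root path carries the same sign pattern and the same phase, as the signs enter only through $p$ and not as explicit phases in $U_{0.5}$. Finally, I need to check the arc-orientation convention at the last step, to fix whether the transfer time is $2h$ or $2h \pm 1$.
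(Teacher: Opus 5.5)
Your proof is correct, and it supplies something the paper lacks. The paper never proves this claim: it is supported only by the depth-$2$ example and by numerical observations (PST at $t=4$ and $t=6$, including the damaged-edge variants). Your argument generalizes the explicit computation in the proof of Theorem~\ref{effect of sign thm}:
\begin{itemize}
\item Lemma~\ref{half lemma} removes reflection at each vertex on the way down through $\Sigma_1$, because the arc back to the parent has $p=1/2$.
\item Lemma~\ref{backward moving} carries the amplitude across the glued leaves.
\item Your Step~3 is the tree analogue of the paper's calculation at the junction $m$. There, two equal incoming amplitudes on arcs with $p=1/4$ merge into the single arc with $p=1/2$, with zero back-scattering.
\end{itemize}
Your version works for arbitrary branching numbers and any depth $h$. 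It therefore also covers the pruned trees in the fault-tolerance observation, which still satisfy hypotheses~1--3.

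The points you left open all close.

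\textbf{Orientation and arrival time.} In the paper's convention, $U_{0.5}$ maps arcs with origin $v$ to arcs with terminal $v$. Since $\ket{r_1}=a_{r_1}$ is fixed by $C_{0.5}$, you get $U_{0.5}\ket{r_1}=d(r_1)^{-1/2}\sum_{c}\ket{\overrightarrow{(c,r_1)}}$. Write $\pi$ for the parent map and $\phi$ for the copy map $T_1\to T_2$, which is the identity on leaves. Inductively:
\begin{itemize}
\item after $j\le h$ steps the state is $\sum_{\mathrm{depth}(v)=j}A(v)\ket{\overrightarrow{(v,\pi(v))}}$;
\item after $h+i$ steps, $1\le i\le h$, it is $\sum_{\mathrm{depth}(v)=h-i+1}A(v)\ket{\overrightarrow{(\phi(\pi(v)),\phi(v))}}$.
\end{itemize}
At $i=h$ these are exactly the outgoing arcs of $r_2$, each with amplitude $d(r_2)^{-1/2}$. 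Hence $(U_{0.5})^{2h}\ket{r_1}=\ket{r_2}$ with $\lambda=1$ and no extra step, consistent with $t=4$ and $t=6$ in the figures.

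\textbf{Equal siblings in Step~3.} This comes directly from the definition of leaf identification. The $\Sigma_2$-children of $\phi(u)$ are precisely the $\phi$-images of the $\Sigma_1$-children of $u$, and all of them carry $A(u)/\sqrt{k_u}$.

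\textbf{Interference.} There is nothing to cancel. Every coefficient in the induction ($d(r_1)^{-1/2}$, $k^{-1/2}$, $1$, $\sqrt{k}$) is a positive real. So your appeal to equal sign patterns along root-to-root paths is unnecessary.
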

		
		\begin{example}
			Consider a tree depicted in \autoref{fig:original_tree}. It has the leaves $3, 4, 5$, and $6$. We take its duplicate in \autoref{fig:duplicate_tree}. The leaves in this graph are $10, 11, 12$, and $13$, which are identical to $3, 4, 5$, and $6$, respectively. We glue the identical leaf vertices and keep one leaf for each of the pairs of leaves $(3, 10)$, $(4, 11)$, $(5, 12)$, and $(6, 13)$. The new graph is depicted in  \autoref{fig:glued_tree_leaf_identification}, which we call a rooted glued tree. The root vertices are $0$ and $7$.
			
			\begin{figure}
				\centering
				\begin{subfigure}[b]{0.22\textwidth}
					\centering
					\begin{tikzpicture}[scale=0.7]
						
						\draw [fill] (0,0) circle [radius=0.08];
						\node [above] at (0,0) {$0$};
						
						\draw [fill] (-1.5,-1.5) circle [radius=0.08];
						\node [left] at (-1.5,-1.5) {$1$};
						\draw [fill] (1.5,-1.5) circle [radius=0.08];
						\node [right] at (1.5,-1.5) {$2$};
						
						\draw [fill] (-2.2,-3) circle [radius=0.08];
						\node [below] at (-2.2,-3) {$3$};
						\draw [fill] (-0.8,-3) circle [radius=0.08];
						\node [below] at (-0.8,-3) {$4$};
						\draw [fill] (0.8,-3) circle [radius=0.08];
						\node [below] at (0.8,-3) {$5$};
						\draw [fill] (2.2,-3) circle [radius=0.08];
						\node [below] at (2.2,-3) {$6$};
						
						\draw (0,0) -- (-1.5,-1.5);
						\draw (0,0) -- (1.5,-1.5);
						\draw [dashed] (-1.5,-1.5) -- (-2.2,-3);
						\draw [dashed] (-1.5,-1.5) -- (-0.8,-3);
						\draw [dashed] (1.5,-1.5) -- (0.8,-3);
						\draw [dashed] (1.5,-1.5) -- (2.2,-3);
						
					\end{tikzpicture}
					\caption{The rooted signed tree with root $0$.}
					\label{fig:original_tree}
				\end{subfigure}
				\hfill
				\begin{subfigure}[b]{0.22\textwidth}
					\centering
					\begin{tikzpicture}[scale=0.7]
						
						\draw [fill] (0,0) circle [radius=0.08];
						\node [above] at (0,0) {$7$};
						
						\draw [fill] (-1.5,-1.5) circle [radius=0.08];
						\node [left] at (-1.5,-1.5) {$8$};
						\draw [fill] (1.5,-1.5) circle [radius=0.08];
						\node [right] at (1.5,-1.5) {$9$};
						
						\draw [fill] (-2.2,-3) circle [radius=0.08];
						\node [below] at (-2.2,-3) {$10$};
						\draw [fill] (-0.8,-3) circle [radius=0.08];
						\node [below] at (-0.8,-3) {$11$};
						\draw [fill] (0.8,-3) circle [radius=0.08];
						\node [below] at (0.8,-3) {$12$};
						\draw [fill] (2.2,-3) circle [radius=0.08];
						\node [below] at (2.2,-3) {$13$};
						
						\draw (0,0) -- (-1.5,-1.5);
						\draw (0,0) -- (1.5,-1.5);
						\draw [dashed] (-1.5,-1.5) -- (-2.2,-3);
						\draw [dashed] (-1.5,-1.5) -- (-0.8,-3);
						\draw [dashed] (1.5,-1.5) -- (0.8,-3);
						\draw [dashed] (1.5,-1.5) -- (2.2,-3);
						
					\end{tikzpicture}
					\caption{A copy of the tree in \autoref{fig:original_tree} with root $7$.}
					\label{fig:duplicate_tree}
				\end{subfigure}
				\hfill
				\begin{subfigure}[b]{0.5\textwidth}
					\centering
					\begin{tikzpicture}[scale=0.5]
						
						\draw [fill] (0,0) circle [radius=0.08];
						\node [above] at (0,0) {$0$};
						
						\draw [fill] (-1.8,-1.5) circle [radius=0.08];
						\node [above] at (-1.8,-1.5) {$1$};
						\draw [fill] (1.8,-1.5) circle [radius=0.08];
						\node [above] at (1.8,-1.5) {$2$};
						
						\draw [fill] (-3.2,-3) circle [radius=0.08];
						\node [above] at (-3.2,-3) {$3$};
						\draw [fill] (-1.0,-3) circle [radius=0.08];
						\node [above] at (-1.0,-3) {$4$};
						\draw [fill] (1.0,-3) circle [radius=0.08];
						\node [above] at (1.0,-3) {$5$};
						\draw [fill] (3.2,-3) circle [radius=0.08];
						\node [above] at (3.2,-3) {$6$};
						
						\draw [fill] (-1.8,-4.5) circle [radius=0.08];
						\node [above] at (-1.8,-4.5) {$8$};
						\draw [fill] (1.8,-4.5) circle [radius=0.08];
						\node [above] at (1.8,-4.5) {$9$};
						
						\draw [fill] (0,-6) circle [radius=0.08];
						\node [above] at (0,-6) {$7$};
						
						\draw (0,0) -- (-1.8,-1.5);
						\draw (0,0) -- (1.8,-1.5);
						\draw [dashed] (-1.8,-1.5) -- (-3.2,-3);
						\draw [dashed] (-1.8,-1.5) -- (-1.0,-3);
						\draw [dashed] (1.8,-1.5) -- (1.0,-3);
						\draw [dashed] (1.8,-1.5) -- (3.2,-3);
						
						\draw (0,-6) -- (-1.8,-4.5);
						\draw (0,-6) -- (1.8,-4.5);
						\draw [dashed] (-1.8,-4.5) -- (-3.2,-3);
						\draw [dashed] (-1.8,-4.5) -- (-1.0,-3);
						\draw [dashed] (1.8,-4.5) -- (1.0,-3);
						\draw [dashed] (1.8,-4.5) -- (3.2,-3);
						
					\end{tikzpicture}
					\caption{Two disjoint copies the graph in \autoref{fig:original_tree} is taken in \autoref{fig:original_tree} and \autoref{fig:duplicate_tree}. There are alternating positive and negative edges at different levels. We glue the leaves to generate the rooted signed glued tree.}
					\label{fig:glued_tree_leaf_identification}
				\end{subfigure}
				\\
				\vspace{.5cm}

				\begin{subfigure}[t]{0.45\textwidth}
					\centering
					\includegraphics[width=\textwidth]{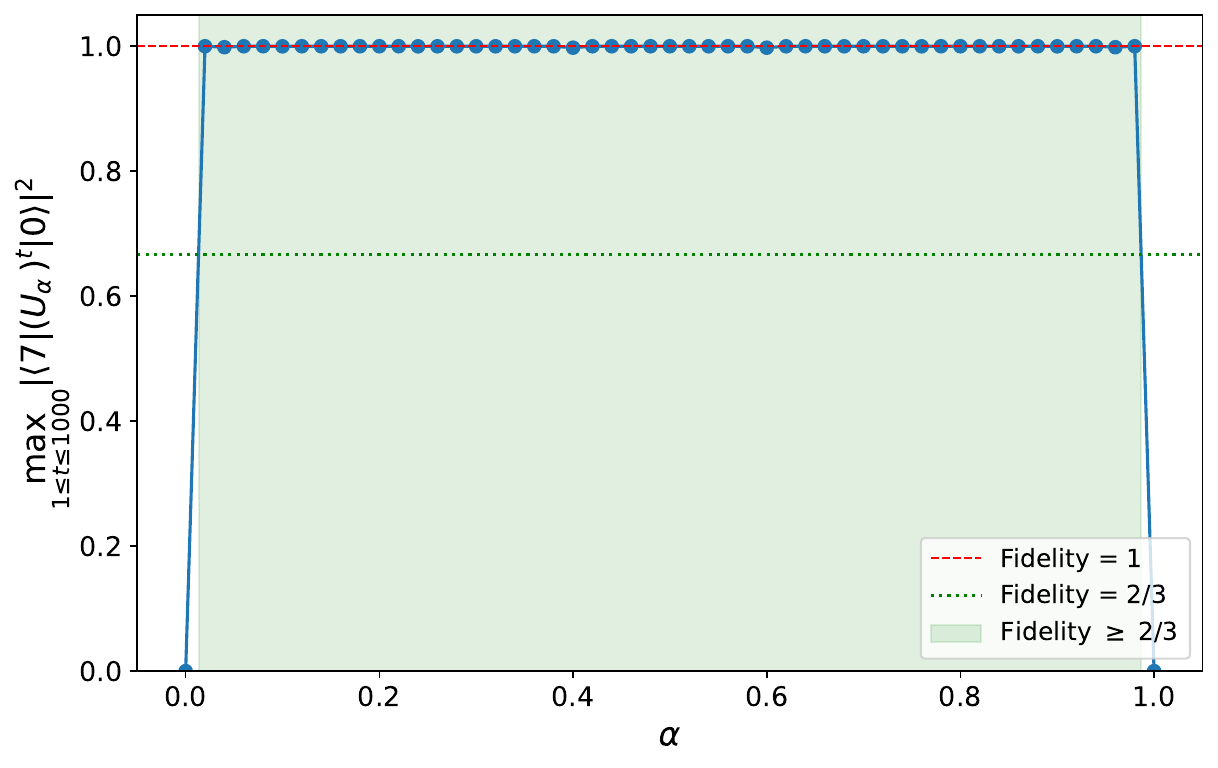}
					\caption{Here we plot the maximum fidelity from vertex $0$ to $7$ with respect to $\alpha$. PST occurs only for $\alpha=0.5.$ We observe that the fidelity is more than $2/3$ when $\alpha \in (0, 1)$.}
				\end{subfigure}
				\hspace{0.5cm}
				\begin{subfigure}[t]{0.45\textwidth}
					\centering
					\includegraphics[width=\textwidth]{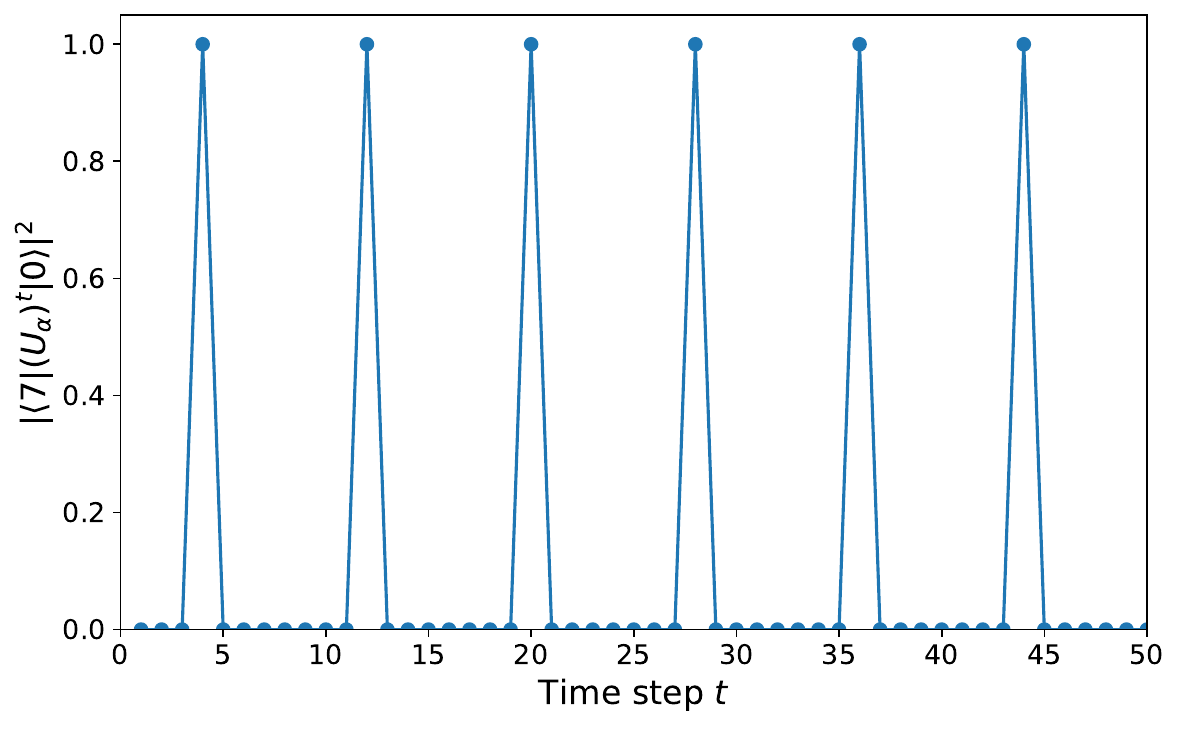}
					\caption{For $\alpha=0.5$, we calculate the fidelity to transfer a state from $0$ to $7$. The Fidelity is $1$ at $t = 4, 12, 20$, etc.}
				\end{subfigure}
				\caption{We study the dependency of fidelity on the parameter $\alpha$ for the graph depicted in \autoref{fig:glued_tree_leaf_identification}.}
			\end{figure}
		\end{example}
	
		Now we have the following numerical observations:
		\begin{observation}
			For every value of $\alpha$ in (0,1) the fidelity of state transfer between $0$ and $7$ is more than $0.99$. The PST occurs for $\alpha=0.5.$  
		\end{observation}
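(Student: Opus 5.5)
I would prove both parts of the observation in one way. First reduce the walk on the glued tree of \autoref{fig:glued_tree_leaf_identification} to a small invariant subspace. Then settle the case $\alpha=0.5$ exactly using Lemma~\ref{backward moving}. Finally treat general $\alpha$ spectrally on the reduced operator.

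\textbf{Symmetry reduction.} The automorphisms of the glued tree permute the two subtrees below each internal vertex and preserve signs, so they commute with $U_\alpha$. The initial state $\ket{0}=\frac{1}{\sqrt2}(\ket{\overrightarrow{(0,1)}}+\ket{\overrightarrow{(0,2)}})$ is invariant under them. Hence the evolution stays in the span of the level-uniform arc states: for each of the four edge levels $0$--$\{1,2\}$, $\{1,2\}$--leaves, leaves--$\{8,9\}$, $\{8,9\}$--$7$, take the normalized uniform sum of the arcs pointing toward $7$ and, separately, of those pointing toward $0$. This gives an $8$-dimensional invariant subspace $W$, which contains both $\ket{0}$ and $\ket{7}$, so all fidelities are computed from the $8\times 8$ matrix $U_\alpha|_W$.

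From \eqref{prob}, the probabilities are as follows:
\begin{itemize}
\item At the roots $0,7$ and at the leaves, all outgoing arcs have the same sign, so $p=1/2$ regardless of $\alpha$.
\item At $1,2,8,9$, the single positive arc toward the root gets $p=\alpha$, and each of the two negative arcs gets $p=(1-\alpha)/2$.
\end{itemize}
Restricted to $W$, the coin $2a_ua_u^\dagger-I$ at a degree-$3$ vertex therefore acts on the pair (arc toward the root, uniform sum of the two other arcs) as the reflection
$$\begin{pmatrix} 2\alpha-1 & 2\sqrt{\alpha(1-\alpha)}\\ 2\sqrt{\alpha(1-\alpha)} & 1-2\alpha\end{pmatrix}.$$
Leaves act as perfect transmitters (Lemma~\ref{backward moving}), and the roots act as reflectors. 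So $U_\alpha|_W$ is a coined walk on a path $0,\{1,2\},\text{leaf},\{8,9\},7$ with two scatterers of reflection amplitude $r=2\alpha-1$ and transmission amplitude $2\sqrt{\alpha(1-\alpha)}$.

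\textbf{PST at $\alpha=0.5$.} Here $r=0$, so by Lemma~\ref{half lemma} there is no back-scattering at any vertex. Every internal vertex then transmits perfectly, exactly as in Lemma~\ref{backward moving}. Following $\ket{0}$ step by step, it advances one level per step and equals a unimodular multiple of $\ket{7}$ after $4$ steps. After reflecting at the root $7$ and returning, the state is back at $\ket{0}$ at $t=8$ and again at $\ket{7}$ at $t=12$. This gives PST at $t=4,12,20,\dots$.

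To see that PST fails for $\alpha\neq 0.5$: at $t=4$, any single reflection leaves a nonzero amplitude outside $\ket{7}$. For other times, I would show that $\ket{7}$ is not proportional to $U_\alpha^t\ket{0}$ by comparing the eigenvector supports of the two states in $W$.

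\textbf{The bound $\max_t F_{0\to7}(t;\alpha)>0.99$.} I would diagonalize $U_\alpha|_W$. Its characteristic polynomial is palindromic with coefficients depending on $\alpha$ only through $r$, so it factors into two quartics in $\lambda+\lambda^{-1}$, giving the eigenphases $\theta_j(\alpha)$ explicitly. Then
$$\braket{7|U_\alpha^t|0}=\sum_j c_j(\alpha)\,e^{it\theta_j(\alpha)},$$
where the weights $c_j$ come from the overlaps of $\ket{0}$ and $\ket{7}$ with the eigenvectors. The reflection symmetry $0\leftrightarrow 7$ forces $c_j=\pm|\langle 0|\phi_j\rangle|^2$, so the sum has modulus at most $1$.

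\textbf{Main obstacle.} The hard step is the uniform lower bound over $\alpha\in(0,1)$. The supremum over $t$ is attained only by approximate phase alignment. For generic $\alpha$ the phases are rationally independent, and Kronecker's theorem gives $\sup_t F=(\sum_j|c_j|)^2$, which equals $1$ if the signs can be aligned. For the exceptional rational $\alpha$, including the endpoints $\alpha\to0,1$ where $r\to\mp1$ and the tree nearly decouples, I would check the finite periodic orbit directly. If the paper's ``maximum'' is meant over a finite time window, this step would instead be a certified numerical computation on a fine $\alpha$-grid, combined with a Lipschitz bound in $\alpha$ for $U_\alpha^t$ at fixed $t$.
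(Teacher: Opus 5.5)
The part of your plan for PST at $\alpha=0.5$ is fine, but the uniform bound you call the main obstacle cannot be proved, because it is false. The paper itself supports this observation only with a numerical plot, and your reduction is strong enough to decide it exactly. It decides it negatively.

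\textbf{What works.} Your reduction to the $8$-dimensional sibling-symmetric space $W$ is correct. On $W$, your argument for PST at $\alpha=0.5$ with $t=4,12,20,\dots$ goes through. The justification, however, is that your reduced coin has $r=0$, not Lemma~\ref{half lemma}. At $1,2,8,9$ the individual leaf-side arcs have $p=1/4$ and do back-scatter, with amplitude $-1/2$. The reflection cancels only in the sibling-symmetric sum.

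\textbf{The closed form.} Write $x_\ell,y_\ell$ for your uniform level-$\ell$ states pointing toward $7$ and toward $0$, so that $\ket{0}=x_1$ and $\ket{7}=y_4$. Put $r=2\alpha-1=\cos\phi$ and $s=2\sqrt{\alpha(1-\alpha)}=\sin\phi$ with $\phi\in(0,\pi)$. Then $U_\alpha$ acts on $W$ as the ring
\[
\begin{aligned}
&x_1\mapsto y_1,\quad y_1\mapsto r x_1+s y_2,\quad y_2\mapsto y_3,\quad y_3\mapsto s y_4-r x_3,\\
&y_4\mapsto x_4,\quad x_4\mapsto r y_4+s x_3,\quad x_3\mapsto x_2,\quad x_2\mapsto s x_1-r y_2 .
\end{aligned}
\]
The $0\leftrightarrow 7$ reflection splits $W$ into two $4$-dimensional sectors. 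Their characteristic polynomials are $\lambda^4-1$ and $\lambda^4-2r\lambda^2+1$. A short computation then gives
\[
\langle 7|U_\alpha^{t}|0\rangle=\begin{cases}\sin^2(a\phi), & t=4a,\\ \sin(a\phi)\sin((a+1)\phi), & t=4a+2,\\ 0, & t \text{ odd}.\end{cases}
\]

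\textbf{The counterexample.} At $\alpha=3/4$ we have $\phi=\pi/3$. Then $U_\alpha^{12}\ket{0}=\ket{0}$, and $F_{0\to 7}(t)\in\{0,9/16\}$ for every $t$. This is below $0.99$, and even below the $2/3$ in the figure caption. The same happens at $\alpha=1/4$. More generally, $\sup_t F=\cos^4\bigl(\pi/(2q)\bigr)<1$ whenever $\phi/\pi=p/q$ in lowest terms with $q$ odd, which is a dense set of $\alpha$. On any finite time window the bound also fails near the endpoints, since $\sin(a\phi)\to 0$ for each fixed $a$ as $\alpha\to 0,1$. So neither your Kronecker route nor your grid-plus-Lipschitz fallback can succeed.

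\textbf{Two further errors.} First, the eigenphases are never rationally independent. $\pm1$ and $\pm i$ are eigenvalues for every $\alpha$, and the remaining eigenvalues are $\pm e^{\pm i\phi/2}$, so the only free frequency is $\phi$. Applied correctly, Kronecker gives $\sup_t F=1$ exactly when $\phi/\pi$ is irrational or a reduced fraction with even denominator. That is the true, weaker replacement for the first sentence of the observation. Second, PST does not fail for $\alpha\neq 0.5$. At $\alpha=(2\pm\sqrt2)/4$, where $\phi=\pi/4$ or $3\pi/4$, the formula gives $\langle 7|U_\alpha^{8}|0\rangle=\sin^2(2\phi)=1$. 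In general PST occurs at $t=4a$ whenever $\phi=(2j+1)\pi/(2a)$. The same fact contradicts the caption's ``only for $\alpha=0.5$''. Only the second sentence of the observation, PST at $\alpha=0.5$, is provable, and your argument handles it.
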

		
		\begin{observation}
			We get PST for $\alpha=0.5$ at $t=4, 12,20,\ldots$. Note that the distance between $0$ to $7$ in the glued tree is 4.
		\end{observation}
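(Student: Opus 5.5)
The plan is to fix $\alpha=0.5$ (the value at which the observations report PST) and first compute all transition probabilities from \eqref{prob}. Write $d$ for the common leaf depth, which exists by hypothesis 1. An internal vertex $v$ of $T_i$ at depth $k$ with $1\le k\le d-1$ has one parent edge of sign $s_k$ and $c(v)$ child edges of sign $s_{k+1}$. By hypotheses 2 and 3, $s_{k+1}=-s_k$. Hence $p(\text{parent arc})=\tfrac12$ and $p(\text{child arc})=\tfrac{1}{2c(v)}$.

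A glued leaf has exactly two neighbours, one parent in each copy. Both edges lie at level $d$, so they carry the same sign and each outgoing arc has probability $\tfrac12$; Lemma~\ref{backward moving} therefore applies. The root has all outgoing edges of the same sign $s_1$, so the walk there is the plain Grover coin, and $\ket{r_1}$ is an eigenvector of the root coin block with eigenvalue $+1$. Lemma~\ref{half lemma} then gives zero back-scattering on every arc whose origin probability is $\tfrac12$, and in particular on every parent arc.

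Next I trace the wavefront through $T_1$ using \eqref{szegedy matrix} together with this zero back-scattering. Suppose amplitude arrives at a depth-$k$ vertex $v$ along its parent link. The reflection coefficient vanishes, and each child link receives coefficient $2\sqrt{1/2}\sqrt{1/(2c(v))}=1/\sqrt{c(v)}$. Since $c(v)\cdot \frac{1}{c(v)}=1$, the norm is fully transmitted downward. By induction on $k$, at time $d$ (up to the one-step offset caused by the root coin and the orientation convention of $S$) the state $\psi_d=U_{0.5}^d\ket{r_1}$ is supported on arcs incident to the leaves. Its amplitudes are explicit products $\prod 1/\sqrt{c}$ along the root-to-leaf paths. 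Lemma~\ref{backward moving} then shows that one further step carries each leaf amplitude across the gluing point into the corresponding arc of $T_2$.

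For the second half, rather than computing the refocusing in $T_2$ directly, I will use symmetry. Let $P$ be the involution of the glued graph that swaps the two copies; it fixes leaves and preserves signs. Then $P$ commutes with $C_{0.5}$ and $S$, and hence with $U=U_{0.5}$. Moreover $S^2=I$ and $C^2=I$, so $U^{-1}=CS=SUS$, that is, $U^{t}S=SU^{-t}$. The key identity to establish from the explicit leaf state is $SP\psi_d=\mu\,\psi_d$ for a phase $\mu$. This holds because the leaf vectors are $P$-mirror images whose orientation is reversed by $S$, with equal coefficients along mirrored paths. Granting this, we obtain
$$U^{d}\psi_d=\mu^{-1}U^dSPU^d\ket{r_1}=\mu^{-1}SP\,U^{-d}U^{d}\ket{r_1}=\mu^{-1}SP\ket{r_1}=\mu^{-1}S\ket{r_2}.$$
One more application of the root coin at $r_2$, where $\ket{r_2}$ is the $+1$ eigenvector, converts $S\ket{r_2}$ into $\ket{r_2}$ up to sign. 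This yields $U^{2d}\ket{r_1}=\lambda\ket{r_2}$ with $|\lambda|=1$, matching $t=4$ for $d=2$ in the observation.

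The main obstacle is the bookkeeping of the mirror identity $SP\psi_d=\mu\psi_d$ and of the exact time index. The convention in \eqref{evolution} makes $U$ map an arc out of $v$ to an arc into $v$, as in Lemma~\ref{backward moving}, so it must be checked carefully whether the leaf state at the half-time is symmetric under $SP$ or under $P$ alone, and whether the arrival time is exactly $2d$. I would first verify everything on the example of \autoref{fig:glued_tree_leaf_identification} with $d=2$. If the root step breaks the symmetry, I would instead work with $\ket{r_1}$ replaced by $S\ket{r_1}$ and absorb the root coin as a final sign. Notably, this argument needs no regularity of branching: unequal branching numbers are handled automatically by the mirror symmetry, because the upward refocusing in $T_2$ is the time reverse of the downward spreading in $T_1$.
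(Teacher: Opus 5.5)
Your route differs from the paper's, which gives no derivation for this observation and simply reads $t=4,12,20,\dots$ off a numerically computed fidelity curve. You instead aim for an analytic argument using the copy-swapping automorphism $P$ and time reversal $U^{-1}=SUS$. Your forward half (zero back-scattering down $T_1$, then Lemma~\ref{backward moving} at the glued leaves) is correct. The mirror step, however, fails as written.

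The identity $SP\psi_d=\mu\psi_d$ is false. $\psi_d=U^{d}\ket{r_1}$ is supported on the arcs $\overrightarrow{(\ell,w_1(\ell))}$ from each glued leaf $\ell$ to its $T_1$-parent. $SP\psi_d$ is supported on $\overrightarrow{(w_2(\ell),\ell)}$, and $P\psi_d$ on $\overrightarrow{(\ell,w_2(\ell))}$. So $\psi_d$ is orthogonal to both, and is symmetric under neither. In the example, $\psi_2=\frac12(\ket{\overrightarrow{(3,1)}}+\ket{\overrightarrow{(4,1)}}+\ket{\overrightarrow{(5,2)}}+\ket{\overrightarrow{(6,2)}})$, while $SP\psi_2=\frac12(\ket{\overrightarrow{(8,3)}}+\ket{\overrightarrow{(8,4)}}+\ket{\overrightarrow{(9,5)}}+\ket{\overrightarrow{(9,6)}})$.

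The correct relation is $SP\psi_d=U\psi_d$, which is exactly Lemma~\ref{backward moving} at the leaves. With it the chain closes with no extra step:
\[
U^{2d}\ket{r_1}=U^{d-1}SP\,U^{d}\ket{r_1}=SP\,U^{-(d-1)}U^{d}\ket{r_1}=SP\,U\ket{r_1}=SPS\ket{r_1}=\ket{r_2},
\]
using $C\ket{r_1}=\ket{r_1}$ and $PS=SP$. Your closing move, applying ``one more root coin'' to $S\ket{r_2}$, cannot work. First, $S\ket{r_2}=U\ket{r_2}$ is orthogonal to $\ket{r_2}$. Second, a further application of $U$ pushes the amplitude back down into $T_2$. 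Third, it would land at time $2d+1$ rather than $2d$.

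There is a second gap: the observation asserts PST at every $t\equiv 4\pmod 8$, but you only target $t=2d=4$. The missing periodicity also follows from $P$. We have $U^{2d}\ket{r_2}=U^{2d}P\ket{r_1}=PU^{2d}\ket{r_1}=P\ket{r_2}=\ket{r_1}$, so $U^{4d}\ket{r_1}=\ket{r_1}$. Hence $U^{2d+4dk}\ket{r_1}=\ket{r_2}$ for all $k\ge 0$, which for $d=2$ is precisely $t=4,12,20,\dots$ with $\lambda=1$.

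With these two repairs your argument becomes a genuine proof. It covers the paper's general Claim for arbitrary, not necessarily regular, branching, which the paper never proves. For this specific ten-vertex graph, though, the shortest rigorous route is to compute $U^{t}\ket{0}$ for $t=1,\dots,8$ directly. This gives $U^{4}\ket{0}=\ket{7}$ and $U^{8}\ket{0}=\ket{0}$.
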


		\begin{figure}[p]
			\centering
			\begin{subfigure}[t]{0.48\textwidth}
				\centering
				\begin{tikzpicture}[every node/.style={font=\footnotesize, inner sep=1.5pt}]
					
					\draw [fill] (0,0) circle [radius=0.05];
					\node [above] at (0,0) {$0$};
					
					\draw [fill] (-1.4,-0.65) circle [radius=0.05];
					\node [above] at (-1.4,-0.65) {$1$};
					\draw [fill] (1.4,-0.65) circle [radius=0.05];
					\node [above] at (1.4,-0.65) {$2$};
					
					\draw [fill] (-2.1,-1.3) circle [radius=0.05];
					\node [above] at (-2.1,-1.3) {$3$};
					\draw [fill] (-0.7,-1.3) circle [radius=0.05];
					\node [above] at (-0.7,-1.3) {$4$};
					\draw [fill] (0.7,-1.3) circle [radius=0.05];
					\node [above] at (0.7,-1.3) {$5$};
					\draw [fill] (2.1,-1.3) circle [radius=0.05];
					\node [above] at (2.1,-1.3) {$6$};
					
					\draw [fill] (-2.3,-1.95) circle [radius=0.05];
					\node [left] at (-2.3,-1.95) {$7$};
					\draw [fill] (-1.8,-1.95) circle [radius=0.05];
					\node [right] at (-1.8,-1.95) {$8$};
					\draw [fill] (-1.0,-1.95) circle [radius=0.05];
					\node [left] at (-1.0,-1.95) {$9$};
					\draw [fill] (-0.5,-1.95) circle [radius=0.05];
					\node [right] at (-0.5,-1.95) {$10$};
					\draw [fill] (0.5,-1.95) circle [radius=0.05];
					\node [left] at (0.5,-1.95) {$11$};
					\draw [fill] (1.0,-1.95) circle [radius=0.05];
					\node [right] at (1.0,-1.95) {$12$};
					\draw [fill] (1.8,-1.95) circle [radius=0.05];
					\node [left] at (1.8,-1.95) {$13$};
					\draw [fill] (2.3,-1.95) circle [radius=0.05];
					\node [right] at (2.3,-1.95) {$14$};
					
					\draw [fill] (-2.1,-2.6) circle [radius=0.05];
					\node [below] at (-2.1,-2.6) {$18$};
					\draw [fill] (-0.7,-2.6) circle [radius=0.05];
					\node [below] at (-0.7,-2.6) {$19$};
					\draw [fill] (0.7,-2.6) circle [radius=0.05];
					\node [below] at (0.7,-2.6) {$20$};
					\draw [fill] (2.1,-2.6) circle [radius=0.05];
					\node [below] at (2.1,-2.6) {$21$};
					
					\draw [fill] (-1.4,-3.25) circle [radius=0.05];
					\node [below] at (-1.4,-3.25) {$16$};
					\draw [fill] (1.4,-3.25) circle [radius=0.05];
					\node [below] at (1.4,-3.25) {$17$};
					
					\draw [fill] (0,-3.9) circle [radius=0.05];
					\node [below] at (0,-3.9) {$15$};

					\draw (0,0) -- (-1.4,-0.65);
					\draw (0,0) -- (1.4,-0.65);
					\draw [dashed] (-1.4,-0.65) -- (-2.1,-1.3);
					\draw [dashed] (-1.4,-0.65) -- (-0.7,-1.3);
					\draw [dashed] (1.4,-0.65) -- (0.7,-1.3);
					\draw [dashed] (1.4,-0.65) -- (2.1,-1.3);
					\draw (-2.1,-1.3) -- (-2.3,-1.95);
					\draw (-2.1,-1.3) -- (-1.8,-1.95);
					\draw (-0.7,-1.3) -- (-1.0,-1.95);
					\draw (-0.7,-1.3) -- (-0.5,-1.95);
					\draw (0.7,-1.3) -- (0.5,-1.95);
					\draw (0.7,-1.3) -- (1.0,-1.95);
					\draw (2.1,-1.3) -- (1.8,-1.95);
					\draw (2.1,-1.3) -- (2.3,-1.95);
					\draw (-2.3,-1.95) -- (-2.1,-2.6);
					\draw (-1.8,-1.95) -- (-2.1,-2.6);
					\draw (-1.0,-1.95) -- (-0.7,-2.6);
					\draw (-0.5,-1.95) -- (-0.7,-2.6);
					\draw (0.5,-1.95) -- (0.7,-2.6);
					\draw (1.0,-1.95) -- (0.7,-2.6);
					\draw (1.8,-1.95) -- (2.1,-2.6);
					\draw (2.3,-1.95) -- (2.1,-2.6);
					\draw [dashed] (-2.1,-2.6) -- (-1.4,-3.25);
					\draw [dashed] (-0.7,-2.6) -- (-1.4,-3.25);
					\draw [dashed] (0.7,-2.6) -- (1.4,-3.25);
					\draw [dashed] (2.1,-2.6) -- (1.4,-3.25);
					\draw (-1.4,-3.25) -- (0,-3.9);
					\draw (1.4,-3.25) -- (0,-3.9);
					
				\end{tikzpicture}
				\caption{We took a complete binary tree. We glued it to its copy.}
				\label{fig:tree_a}
			\end{subfigure}
			\hfill
			\begin{subfigure}[t]{0.45\textwidth}
				\centering
				\includegraphics[width=\textwidth]{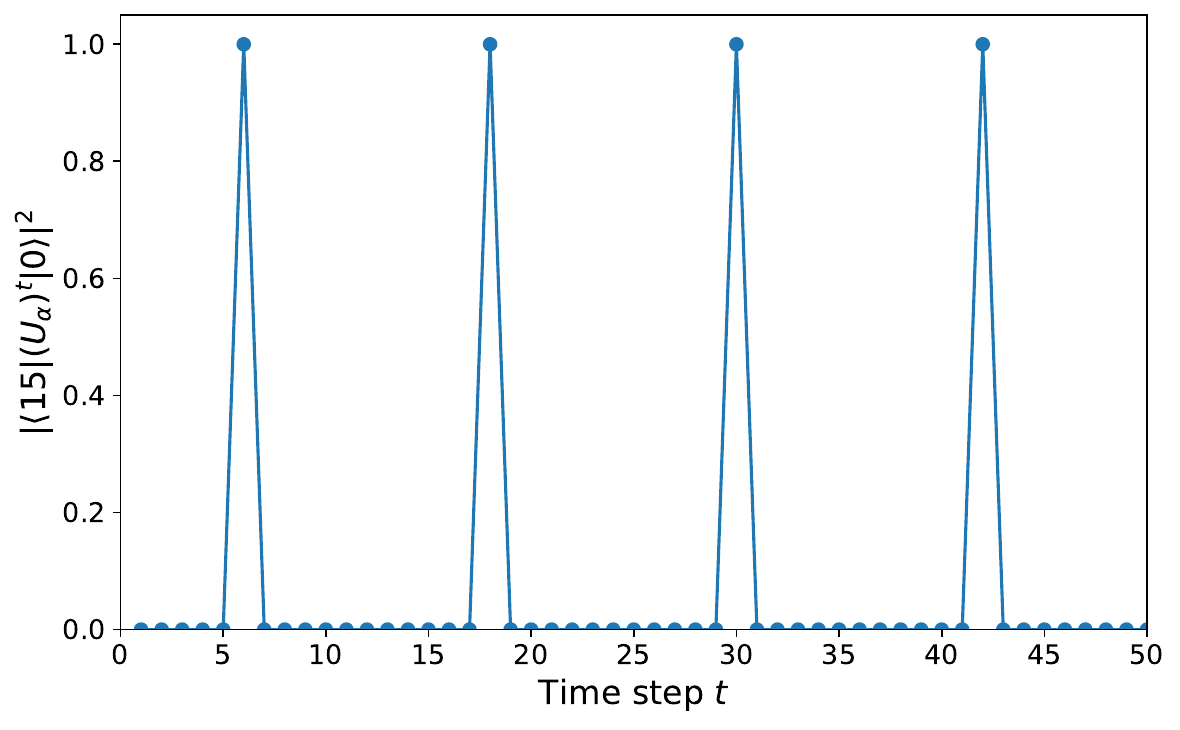}
				\caption{This is a time versus fidelity plot from vertices $r_1 = 0$ to $r_2 = 15$ in the graph \autoref{fig:tree_a} with $\alpha=0.5$. PST occurs from $r_1 = 0$ to $r_2 = 15$ at $t=6, 18, 30 \ldots$.}
			\end{subfigure}
			
			\vspace{0.2cm}
			
			\begin{subfigure}[t]{0.48\textwidth}
				\centering
				\begin{tikzpicture}[every node/.style={font=\footnotesize, inner sep=1.5pt}]

					\draw [fill] (0,0) circle [radius=0.05];
					\node [above] at (0,0) {$0$};
					
					\draw [fill] (-1.4,-0.65) circle [radius=0.05];
					\node [above] at (-1.4,-0.65) {$1$};
					\draw [fill] (1.4,-0.65) circle [radius=0.05];
					\node [above] at (1.4,-0.65) {$2$};
					
					\draw [fill] (-2.1,-1.3) circle [radius=0.05];
					\node [above] at (-2.1,-1.3) {$3$};
					\draw [fill] (-0.7,-1.3) circle [radius=0.05];
					\node [above] at (-0.7,-1.3) {$4$};
					\draw [fill] (0.7,-1.3) circle [radius=0.05];
					\node [above] at (0.7,-1.3) {$5$};
					\draw [fill] (2.1,-1.3) circle [radius=0.05];
					\node [above] at (2.1,-1.3) {$6$};
					
					\draw [fill] (-2.3,-1.95) circle [radius=0.05];
					\node [left] at (-2.3,-1.95) {$7$};
					\draw [fill] (-1.8,-1.95) circle [radius=0.05];
					\node [right] at (-1.8,-1.95) {$8$};
					\draw [fill] (-0.5,-1.95) circle [radius=0.05];
					\node [right] at (-0.5,-1.95) {$10$};
					\draw [fill] (0.5,-1.95) circle [radius=0.05];
					\node [left] at (0.5,-1.95) {$11$};
					\draw [fill] (1.0,-1.95) circle [radius=0.05];
					\node [right] at (1.0,-1.95) {$12$};
					\draw [fill] (1.8,-1.95) circle [radius=0.05];
					\node [left] at (1.8,-1.95) {$13$};
					\draw [fill] (2.3,-1.95) circle [radius=0.05];
					\node [right] at (2.3,-1.95) {$14$};
					
					\draw [fill] (-2.1,-2.6) circle [radius=0.05];
					\node [below] at (-2.1,-2.6) {$18$};
					\draw [fill] (-0.7,-2.6) circle [radius=0.05];
					\node [below] at (-0.7,-2.6) {$19$};
					\draw [fill] (0.7,-2.6) circle [radius=0.05];
					\node [below] at (0.7,-2.6) {$20$};
					\draw [fill] (2.1,-2.6) circle [radius=0.05];
					\node [below] at (2.1,-2.6) {$21$};
					
					\draw [fill] (-1.4,-3.25) circle [radius=0.05];
					\node [below] at (-1.4,-3.25) {$16$};
					\draw [fill] (1.4,-3.25) circle [radius=0.05];
					\node [below] at (1.4,-3.25) {$17$};
					
					\draw [fill] (0,-3.9) circle [radius=0.05];
					\node [below] at (0,-3.9) {$15$};

					\draw (0,0) -- (-1.4,-0.65);
					\draw (0,0) -- (1.4,-0.65);
					\draw [dashed] (-1.4,-0.65) -- (-2.1,-1.3);
					\draw [dashed] (-1.4,-0.65) -- (-0.7,-1.3);
					\draw [dashed] (1.4,-0.65) -- (0.7,-1.3);
					\draw [dashed] (1.4,-0.65) -- (2.1,-1.3);
					\draw (-2.1,-1.3) -- (-2.3,-1.95);
					\draw (-2.1,-1.3) -- (-1.8,-1.95);
					\draw (-0.7,-1.3) -- (-0.5,-1.95);
					\draw (0.7,-1.3) -- (0.5,-1.95);
					\draw (0.7,-1.3) -- (1.0,-1.95);
					\draw (2.1,-1.3) -- (1.8,-1.95);
					\draw (2.1,-1.3) -- (2.3,-1.95);
					\draw (-2.3,-1.95) -- (-2.1,-2.6);
					\draw (-1.8,-1.95) -- (-2.1,-2.6);
					\draw (-0.5,-1.95) -- (-0.7,-2.6);
					\draw (0.5,-1.95) -- (0.7,-2.6);
					\draw (1.0,-1.95) -- (0.7,-2.6);
					\draw (1.8,-1.95) -- (2.1,-2.6);
					\draw (2.3,-1.95) -- (2.1,-2.6);
					\draw [dashed] (-2.1,-2.6) -- (-1.4,-3.25);
					\draw [dashed] (-0.7,-2.6) -- (-1.4,-3.25);
					\draw [dashed] (0.7,-2.6) -- (1.4,-3.25);
					\draw [dashed] (2.1,-2.6) -- (1.4,-3.25);
					\draw (-1.4,-3.25) -- (0,-3.9);
					\draw (1.4,-3.25) -- (0,-3.9);
					
				\end{tikzpicture}
				\caption{Suppose the edge (4, 9) is damaged in \autoref{fig:tree_a}. Therefore, we remove (4, 9) and  (9, 19) to maintain the glued tree structure.}
				\label{fig:tree_b}
			\end{subfigure}
			\hfill
			\begin{subfigure}[t]{0.45\textwidth}
				\centering
				\includegraphics[width=\textwidth]{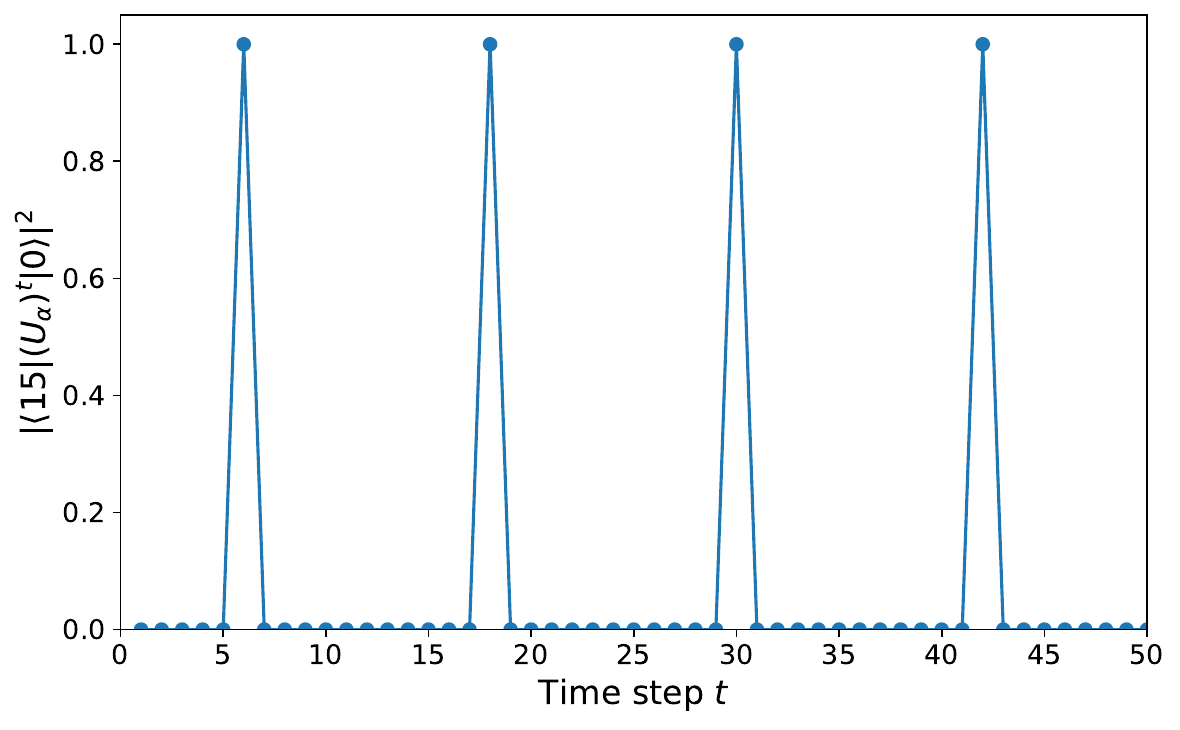}
				\caption{This is a time versus fidelity plot from vertices $r_1 = 0$ to $r_2=15$ in the graph \autoref{fig:tree_b} with $\alpha=0.5.$ PST occurs from $r_1=0$ to $r_2=15$ at $t=6, 18, 30 \ldots$.}
			\end{subfigure}
			
			\vspace{0.2cm}
			
			\begin{subfigure}[t]{0.48\textwidth}
				\centering
				\begin{tikzpicture}[every node/.style={font=\footnotesize, inner sep=1.5pt}]

					\draw [fill] (0,0) circle [radius=0.05];
					\node [above] at (0,0) {$0$};
					
					\draw [fill] (-1.4,-0.65) circle [radius=0.05];
					\node [above] at (-1.4,-0.65) {$1$};
					\draw [fill] (1.4,-0.65) circle [radius=0.05];
					\node [above] at (1.4,-0.65) {$2$};
					
					\draw [fill] (-2.1,-1.3) circle [radius=0.05];
					\node [above] at (-2.1,-1.3) {$3$};
					\draw [fill] (-0.7,-1.3) circle [radius=0.05];
					\node [above] at (-0.7,-1.3) {$4$};
					\draw [fill] (2.1,-1.3) circle [radius=0.05];
					\node [above] at (2.1,-1.3) {$6$};
					
					\draw [fill] (-2.3,-1.95) circle [radius=0.05];
					\node [left] at (-2.3,-1.95) {$7$};
					\draw [fill] (-1.8,-1.95) circle [radius=0.05];
					\node [right] at (-1.8,-1.95) {$8$};
					\draw [fill] (-1.0,-1.95) circle [radius=0.05];
					\node [left] at (-1.0,-1.95) {$9$};
					\draw [fill] (-0.5,-1.95) circle [radius=0.05];
					\node [right] at (-0.5,-1.95) {$10$};
					\draw [fill] (1.8,-1.95) circle [radius=0.05];
					\node [left] at (1.8,-1.95) {$13$};
					\draw [fill] (2.3,-1.95) circle [radius=0.05];
					\node [right] at (2.3,-1.95) {$14$};
					
					\draw [fill] (-2.1,-2.6) circle [radius=0.05];
					\node [below] at (-2.1,-2.6) {$18$};
					\draw [fill] (-0.7,-2.6) circle [radius=0.05];
					\node [below] at (-0.7,-2.6) {$19$};
					\draw [fill] (2.1,-2.6) circle [radius=0.05];
					\node [below] at (2.1,-2.6) {$21$};
					
					\draw [fill] (-1.4,-3.25) circle [radius=0.05];
					\node [below] at (-1.4,-3.25) {$16$};
					\draw [fill] (1.4,-3.25) circle [radius=0.05];
					\node [below] at (1.4,-3.25) {$17$};
					
					\draw [fill] (0,-3.9) circle [radius=0.05];
					\node [below] at (0,-3.9) {$15$};

					\draw (0,0) -- (-1.4,-0.65);
					\draw (0,0) -- (1.4,-0.65);
					\draw [dashed] (-1.4,-0.65) -- (-2.1,-1.3);
					\draw [dashed] (-1.4,-0.65) -- (-0.7,-1.3);
					\draw [dashed] (1.4,-0.65) -- (2.1,-1.3);
					\draw (-2.1,-1.3) -- (-2.3,-1.95);
					\draw (-2.1,-1.3) -- (-1.8,-1.95);
					\draw (-0.7,-1.3) -- (-1.0,-1.95);
					\draw (-0.7,-1.3) -- (-0.5,-1.95);
					\draw (2.1,-1.3) -- (1.8,-1.95);
					\draw (2.1,-1.3) -- (2.3,-1.95);
					\draw (-2.3,-1.95) -- (-2.1,-2.6);
					\draw (-1.8,-1.95) -- (-2.1,-2.6);
					\draw (-1.0,-1.95) -- (-0.7,-2.6);
					\draw (-0.5,-1.95) -- (-0.7,-2.6);
					\draw (1.8,-1.95) -- (2.1,-2.6);
					\draw (2.3,-1.95) -- (2.1,-2.6);
					\draw [dashed] (-2.1,-2.6) -- (-1.4,-3.25);
					\draw [dashed] (2.1,-2.6) -- (1.4,-3.25);
					\draw [dashed] (-1.4,-3.25) -- (-0.7,-2.6);
					\draw (-1.4,-3.25) -- (0,-3.9);
					\draw (1.4,-3.25) -- (0,-3.9);
					
				\end{tikzpicture}
				\caption{If the edge (2, 5) is damaged in \autoref{fig:tree_a}, we remove (2, 5), (5, 11), (5, 12), (11, 20),(12, 20) and (20,17) to maintain glued tree structure.}
				\label{fig:tree_c}
			\end{subfigure}
			\hfill
			\begin{subfigure}[t]{0.45\textwidth}
				\centering
				\includegraphics[width=\textwidth]{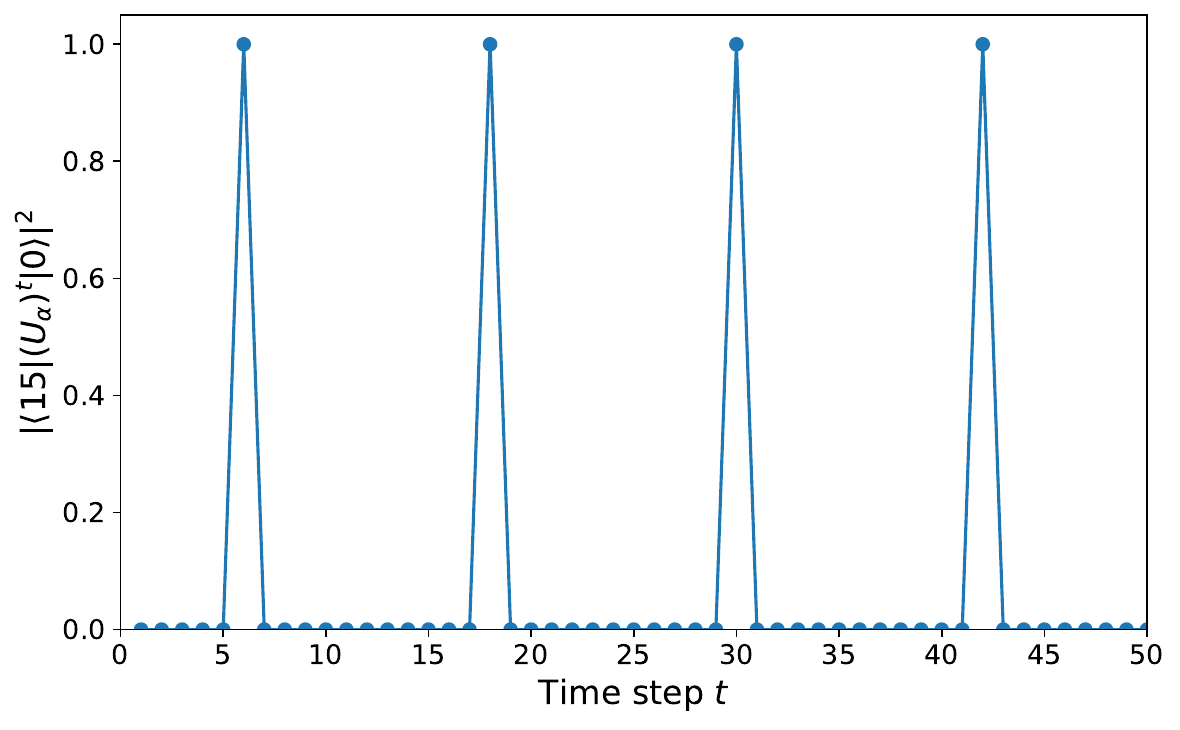}
				\caption{This is a time versus fidelity plot from vertices  $r_1=0$ to $r_2=15$ in the graph \autoref{fig:tree_c} with $\alpha=0.5.$ PST occurs from $r_1=0$ to $r_2=15$ at time step $t=6, 18, 30 \ldots$.}
			\end{subfigure}
			\caption{In the \autoref{fig:tree_a} we consider a glued tree. When an edge is damaged we need to remove it. We also remove a number of additional edges to maintain the structure of glued tree. The graphs in two steps of this operations are depicted in \autoref{fig:tree_b} and \autoref{fig:tree_c}. The resultants graphs also allow PST between $r_1 = 0$ and $r_2 = 15$.}
		\end{figure}
		
		\begin{observation}
			Making larger glued trees also support PST. For example consider \autoref{fig:tree_a}.
		\end{observation}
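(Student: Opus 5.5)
The plan is to prove PST from $r_1$ to $r_2$ at $t=2d$ with $\alpha=0.5$. The setting is a glued tree built from a rooted tree $T$ of depth $d$ whose leaves all lie at depth $d$ and whose level signs alternate, glued to a mirror copy. This covers the complete binary example in \autoref{fig:tree_a} ($d=3$, $t=6$) and the pruned variants in \autoref{fig:tree_b} and \autoref{fig:tree_c}. I will not use spectral theory. Instead I will track the wavepacket exactly layer by layer, in the spirit of Lemma~\ref{backward moving}, and show that the support moves one level per step with no back-scattering.

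\textbf{Step 1: local transition probabilities at $\alpha=0.5$.} Take a non-root internal vertex $v$ of $\Sigma_1$ at depth $k$ with $n_v$ children. Its parent edge has level $k$ and its child edges have level $k+1$, so by the alternating-sign hypothesis the two groups have opposite signs. Equation~\eqref{prob} then gives $p=\tfrac12$ on the arc to the parent and $p=\tfrac{1}{2n_v}$ on each arc to a child. At a root every edge has level $1$ and hence the same sign, so $p=1/\deg(r)$. A glued leaf has exactly one parent in each copy, and both edges have level $d$ and the same sign. So it has degree $2$ with $p=\tfrac12$ on each arc, and Lemma~\ref{backward moving} applies there directly.

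\textbf{Step 2: one-step propagation via \eqref{szegedy matrix}.} I use the formula in the matrix elements $(U_{0.5})_{\vec f,\vec e}=2\sqrt{p(\vec e)}\sqrt{p(\vec f^{-1})}-\delta_{\vec e^{-1},\vec f}$. I then check the following three cases.

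\begin{enumerate}
\item \emph{Descending in $\Sigma_1$.} Suppose amplitude $c$ sits on an arc pointing into $v$ from its parent. The reflection back to the parent is $2\cdot\tfrac12-1=0$, which is exactly Lemma~\ref{half lemma}. Each arc into a child receives $2\sqrt{\tfrac12}\sqrt{\tfrac1{2n_v}}\,c=c/\sqrt{n_v}$.
\item \emph{Ascending in $\Sigma_2$.} Suppose a vertex $w$ with $n_w$ children receives amplitude $c/\sqrt{n_w}$ on each of its $n_w$ child arcs. Then the outgoing parent arc receives $n_w\cdot 2\sqrt{\tfrac1{2n_w}}\sqrt{\tfrac12}\cdot c/\sqrt{n_w}=c$. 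The child arc $j$ receives $\sum_i 2\cdot\tfrac{1}{2n_w}\cdot\tfrac{c}{\sqrt{n_w}}-\tfrac{c}{\sqrt{n_w}}=0$.
\item \emph{Leaves.} Lemma~\ref{backward moving} transfers the amplitude unchanged across each leaf.
\end{enumerate}

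\textbf{Step 3: induction and bookkeeping.} Start from $\ket{r_1}$. By induction on $t$, the state at time $t\le d$ is supported on arcs from level $t-1$ to level $t$ of $\Sigma_1$. The amplitude on each such arc equals the product of the factors $1/\sqrt{n_u}$ along the root path, times the initial $1/\sqrt{\deg r_1}$. Since $\Sigma_2$ is the mirror image of $\Sigma_1$, the splitting factors in $\Sigma_1$ are undone by the matching merging factors in the ascending case. After $d$ further steps all amplitude sits on the arcs at $r_2$ with the uniform weight $1/\sqrt{\deg r_2}$. This gives $(U_{0.5})^{2d}\ket{r_1}=\lambda\ket{r_2}$ with $|\lambda|=1$. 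The argument uses only the mirror symmetry, so it survives the pruning in \autoref{fig:tree_b} and \autoref{fig:tree_c}, where some $n_v=1$. This explains the fault tolerance.

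\textbf{Main obstacle.} The delicate point is the last step at $r_2$ and the matching of arc orientations. The walk naturally delivers amplitude on arcs pointing \emph{into} $r_2$, whereas Definition~\ref{pst} uses the outgoing arcs $O_{r_2}$, and the convention $U=SC$ (shift after coin) matters here. I would fix the convention by recomputing the first step from $\ket{r_1}$. One application of $C$ fixes $a_{r_1}=\ket{r_1}$, and $S$ then flips these arcs to arcs into the level-1 vertices. Carrying this consistently should show that the final coin-then-shift at $r_2$ produces $\sum_{\vec e\in O_{r_2}}\ket{\vec e}/\sqrt{\deg r_2}$, up to a global sign from the $-I$ in \eqref{coin}. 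If the parities do not align, I would fall back to treating the fidelity with the in-arc state at $r_2$. Periodicity, i.e.\ PST again at $t=2d+4d k$ as in the observed times $6,18,30$, would follow from the same propagation run backwards. Since the uniform root state is itself a coin eigenvector, the state reflects at $r_2$ and retraces the path.
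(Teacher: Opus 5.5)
Your argument is correct, and it takes a genuinely different route from the paper. The paper gives no proof of this observation. It only reports numerically computed fidelity curves for the graphs in \autoref{fig:tree_a}--\autoref{fig:tree_c} (PST at $t=6,18,30,\dots$), and it also leaves the preceding Claim unproved. You instead extend to this whole class the explicit arc-by-arc tracking the paper uses for the dumbbell in Theorem~\ref{effect of sign thm}:
\begin{enumerate}
\item zero reflection at every descending junction, by Lemma~\ref{half lemma};
\item exact refocusing at every ascending junction, via your computation $\sum_i 2\cdot\frac{1}{2n_w}\cdot\frac{c}{\sqrt{n_w}}-\frac{c}{\sqrt{n_w}}=0$;
\item lossless transfer at the glued leaves, by Lemma~\ref{backward moving}.
\end{enumerate}
Your approach proves PST for every depth and branching pattern at once, with $\lambda=1$. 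It explains the arrival time $2d$ and the observed period $4d$, giving PST at $t=2d(2k+1)$, which matches $4,12,20$ and $6,18,30$. It also settles the fault-tolerance observation for any pruning that preserves the mirror structure. The numerics, in exchange, give the full fidelity profile in $t$ and $\alpha$, which your argument does not address.

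Two points to tighten.

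First, the uniform-amplitude hypothesis of your ascending case is exactly where the mirror gluing is used. Siblings in $T_2$ are siblings in $T_1$, so they carry equal amplitude. In the pruned trees the amplitudes at a fixed depth are not all equal. For example, in \autoref{fig:tree_b} vertex $10$ carries $1/2$ while vertex $7$ carries $1/(2\sqrt2)$. So state this dependence explicitly.

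Second, your ``main obstacle'' is only a naming issue, and it resolves in your favour. With $U=SC$, $U\ket{\overrightarrow{e}}$ is supported on arcs $\overrightarrow{f}$ with $t(\overrightarrow{f})=o(\overrightarrow{e})$. Hence a walker at $x$ is always carried by arcs with origin $x$. On descent this is $\overrightarrow{(x,\mathrm{parent}(x))}$, which is what you describe as the arc ``pointing into $x$ from its parent.'' Read literally as $\overrightarrow{(\mathrm{parent}(x),x)}$, Lemma~\ref{half lemma} would not apply. Also, the final coin acts at the depth-one vertices of the second tree, not at $r_2$. It deposits amplitude $1/\sqrt{\deg r_1}$ on each $\overrightarrow{(r_2,w)}$, $w\in N(r_2)$. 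Since $\deg r_1=\deg r_2$, this is exactly $\ket{r_2}$ with $\lambda=1$. There is no stray sign from the $-I$, and no fallback to an in-arc fidelity is needed.
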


		\begin{observation}
			Let an edge is damaged. Remove the edges, including the damaged edge, such that the new graph preserves the structure of the glued tree. The new graph also supports PST between the roots at same time step equal to distance between the roots.
		\end{observation}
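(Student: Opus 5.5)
The plan is to follow the walk $U_{0.5}$ step by step from $\ket{r_1}$ and show that at every time $0\le k\le 2d$ the state is supported only on arcs $\overrightarrow{(x,y)}$ with $depth(y)=depth(x)+1$ in the first tree, or the analogous forward arcs in the second tree. Here $2d$ is the distance between the roots, and ``forward'' means pointing from $r_1$ towards $r_2$. At time $2d$ the whole amplitude then sits on the arcs entering $r_2$. This is $\ket{r_2}$ up to a unimodular factor once we check that those amplitudes are equal. The pruning described in the observation keeps the three hypotheses of the Claim (equal leaf depth, one sign per level, alternating levels). It also keeps the glued structure: every surviving leaf has exactly one parent in each copy, and the pruned graph is invariant under the mirror map $\mu$ exchanging the two copies. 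I will use only these properties, so the argument covers the intact and the damaged glued trees at once.

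\textbf{Step 1 (local scattering at $\alpha=0.5$).} Because signs alternate by level, every internal vertex $v\neq r_1,r_2$ has its arc to the parent of one sign and its arcs to the children of the opposite sign. By \eqref{prob}, the parent arc has probability $1/2$ and each of the $c_v$ child arcs has probability $1/(2c_v)$. At a glued leaf, both incident edges have the same sign, so each gets $1/2$, and Lemma~\ref{backward moving} applies directly. At the roots all arcs carry probability $1/c_r$. Lemma~\ref{half lemma} then says that a packet arriving at $v$ through its parent arc has zero back-reflection. The coefficients \eqref{szegedy matrix} send it only onto the child arcs, each with weight $2\sqrt{1/2}\sqrt{1/(2c_v)}=\sqrt{2/c_v}$ (the $-\delta$ term does not occur). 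This gives the first half: by induction on $k\le d$, the state at time $k$ is a superposition of forward arcs at level $k$. Its amplitude on an arc is the product of the branching factors $\sqrt{2/c}$ along the unique path from $r_1$, with the root contributing $1/\sqrt{c_{r_1}}$.

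\textbf{Step 2 (merging in the second tree).} A vertex $w$ of the second tree receives amplitudes $a_1,\dots,a_c$ on its $c$ incoming child arcs. By \eqref{szegedy matrix}, the amplitude reflected back onto child arc $j$ is $\frac{1}{c}\sum_i a_i-a_j$, and the amplitude sent to the parent arc is $\sqrt{2/c}\,\sum_i a_i\cdot\frac{1}{\sqrt2}$. Thus back-reflection vanishes exactly when $a_1=\dots=a_c$. In that case the full norm $\sum_i|a_i|^2$ moves onto the parent arc. The main obstacle is to prove this equal-amplitude condition at every merging vertex after arbitrary symmetric pruning. The subtrees below different children of $w$ may be pruned differently, so the amplitudes are not obviously equal; in the example of \autoref{fig:tree_c}, vertex $18$ collects two paths while $19$ collects one.

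\textbf{Step 3 (the obstacle).} To get equal amplitudes, I would first try a time-reversal/mirror argument rather than direct bookkeeping. $U_{0.5}$ is real, and $SU_{0.5}S=U_{0.5}^{T}$ holds since $C$ is symmetric and $S^2=I$. The mirror map $\mu$ commutes with $U_{0.5}$. Combining these, the forward amplitude entering $\mu(x)$ from its child $\mu(y)$ at time $2d-k$ should be related to the amplitude leaving $x$ towards $y$ at time $k$ in the first tree. One then has to check that the first-tree amplitudes at a fixed level, regrouped by their ancestor, produce equal totals at each merge. If this symmetry argument does not close, the fallback is an explicit induction. By Step~1, the amplitude reaching a glued leaf $\ell$ equals $\prod\sqrt{2/c}$ over its first-tree ancestors. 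The mirror pruning gives the same branching numbers on the second-tree ancestors, and these numbers cancel level by level in the products of Step~2. It remains to verify by induction on $k=d+1,\dots,2d$ that the total amplitude reaching each level-$(2d-k)$ vertex is the same for all its incoming children. Once Step~3 holds, norm conservation makes the state at time $2d$ equal to $\lambda\ket{r_2}$ with $|\lambda|=1$, which is PST at the time equal to the distance between the roots.
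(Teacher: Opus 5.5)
Your route differs from the paper's. The paper gives no derivation of this observation. It offers only the simulated fidelity curves for the two damaged instances in \autoref{fig:tree_b} and \autoref{fig:tree_c}, plus the heuristic that $p(\overrightarrow{e})=1/2$ at every junction suppresses back-scattering. A completed version of your analytic argument would cover every structure-preserving pruning at once. Your Step~2 is also sharper than that heuristic: at a merging vertex of the second tree the incoming child arcs have $p=1/(2c)\neq 1/2$. So Lemma~\ref{half lemma} gives nothing there, and the cancellation must come from equal incoming amplitudes. Your reading that preserving the glued-tree structure forces a mirror-symmetric pruning with all surviving leaves at depth $d$ is right, since the only vertices shared by the two trees are the original leaves.

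As written, however, the proof has a gap exactly at the point you call the main obstacle. Step~3 is never carried out (``should be related'', ``it remains to verify''), so neither the equal-amplitude condition nor PST is established. The bookkeeping you sketch also rests on a wrong coefficient: $2\sqrt{1/2}\sqrt{1/(2c_v)}=1/\sqrt{c_v}$, not $\sqrt{2/c_v}$. Your value would double the norm at every branching. With $\sqrt{2/c}$ per level against the $\sqrt{c}$ gained at each merge, the factors do not cancel as you claim.

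Both of your routes close quickly once this is fixed; write $U=U_{0.5}$.

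\textbf{Induction route.} Let $A(y)=\prod_u c_u^{-1/2}$, the product over the strict ancestors $u$ of $y$ in the first tree. This is the amplitude entering $y$ at time $depth(y)$, and it is the same for siblings. The invariant is: at time $2d+1-j$ the state equals $\sum A(y)\ket{\overrightarrow{(\mu(x),\mu(y))}}$, summed over $depth(y)=j$ with $x$ the parent of $y$.
\begin{enumerate}
\item The base case $j=d$ is Lemma~\ref{backward moving} at the glued leaves.
\item In the inductive step, $\mu(x)$ receives $c_x$ equal inputs $A(x)c_x^{-1/2}$ and reflects nothing, by your Step~2. It emits $\tfrac{1}{\sqrt{c_x}}\cdot c_x\cdot A(x)c_x^{-1/2}=A(x)$.
\item At $j=1$ the state is $c_{r_1}^{-1/2}\sum_y\ket{\overrightarrow{(r_2,\mu(y))}}=\ket{r_2}$.
\end{enumerate}

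\textbf{Symmetry route.} Let $M$ be the arc permutation induced by $\mu$. Then $MU=UM$, and $SUS=CS=U^{-1}$ because $C^2=S^2=I$. Also $U\ket{r_1}=S\ket{r_1}$, because $\ket{r_1}=a_{r_1}$ is fixed by $C$. Step~1 then only has to show that $\psi_d=U^d\ket{r_1}$ is supported on the arcs $\overrightarrow{(\ell,x_\ell)}$, where $x_\ell$ is the first-tree parent of the glued leaf $\ell$. On these arcs Lemma~\ref{backward moving} gives $U\psi_d=MS\psi_d$. Hence
\[
U^{2d}\ket{r_1}=U^{d-1}MS\psi_d=MSU^{-(d-1)}\psi_d=MSU\ket{r_1}=M\ket{r_1}=\ket{r_2},
\]
with no merge bookkeeping at all. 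Applying $M$ once more gives $U^{4d}\ket{r_1}=\ket{r_1}$. This accounts for the recurrence times $6,18,30$ reported numerically for $d=3$.
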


		A rooted glued tree acts as a high-capacity, multi-path quantum router. It consists of an input tree $T_1$ and an output tree $T_2$, each of depth $d$, joined together at their $2^d$ leaf boundaries. As a result, we obtain a deterministic, measurement-free spatial transport fabric. As a quantum router, it is governed by the following architectural and physical mechanisms:
		\begin{enumerate}
			\item 
				\textbf{Source Port (Root $r_1$):} It acts as the input terminal into the network fabric.
			\item 
				\textbf{Routing Fabric (Hierarchical Branching Levels):} The intermediate levels create a distributed multi-path transmission corridor.
			\item 
				\textbf{Destination Port (Root $r_2$):} It serves as the output terminal where the distributed wavefront recombines constructively at a deterministic arrival time $\tau = 2d$.
		\end{enumerate}
		
		In an unsigned quantum walk on a tree, routing fails. The tree levels induce asymmetric scattering and back-reflections and trap the wave-packets in an exponentially broad distribution which, does not focus at the opposite root. A signed rooted glued tree resolves this problem through interference-driven guidance.  Assigning alternating edge signs $\sigma = +1$ and $\sigma = -1$ between successive generational levels establishes a uniform localized coin probability $p(\vec{e}) = 1/2$ at each branching junction. We observe numerically that back-scattering toward the parent nodes is canceled out at each step. The wavepacket travels strictly forward along all possible paths with constant ballistic group velocity. At the central interface where we glue the leaves of $T_1$ and $T_2$, the relative sign modulation ensures that the wave-packets cross the boundary without reflecting back into the input tree.

	\section{A switch in a routing network}
	
		\begin{definition}
			\textbf{(Signed dumbbell graph)} Let $\Sigma = (V(G), E(G), \sigma_G)$ be a signed graph where $G$ is the dumbbell graph $D_{p,k,q}$, obtained from two vertex-disjoint cycles $C_p$ and $C_q$ joined by a path $P_{k+2}$, $k \geq 0$, and let $\sigma_G : E(G) \to \{+,-\}$ be the corresponding sign function. We call $\Sigma$ a signed dumbbell graph. 
		\end{definition}
		
		When $k=0$, the connecting path reduces to a single edge, such that $G = D_{p,0,q}$. We refer to this edge as the bridge. In this paper, we focus on two signed dumbbell graphs, denoted $\Sigma_1$ and $\Sigma_2$, on the underlying topology $D_{2m,0,2n}$. In $\Sigma_1$ all edges are positive. In contrast, in $\Sigma_2$ all the edges in the cycles $C_{2m}$ and $C_{2n}$ are positive, but the bridge is negative. The graph $\Sigma_2$ is depicted in \autoref{joined cycle 4 6} with $m = 2$ and $n = 3$. In the case of $\Sigma_1$, the bridge edge $(2, 4)$ is also positive.
	
		The vertices in the dumbbell graphs $G = D_{2m,0,2n}$ can be labeled by $0, 1, \dots, m, m+1, \dots, 2m-1, 2m, \dots, 2m+2n-1$. The vertices $0, 1, \dots m, (m + 1), \dots (2m - 1)$ form a cycle of $2m$ vertices. The second cycle of $2n$ vertices is formed by the vertices $2m, (2m + 1), \dots, (2m + n), \dots (2m + 2n -1)$. All the edges of the two cycles are positive. The bridge $(m, 2m)$ may be positive or negative for $\Sigma_1$, or $\Sigma_2$, respectively. Note that the vertices $0$ and $(2m + n)$ have the maximum distance between them.

	\begin{figure}
		\centering
		\begin{subfigure}[t]{0.31\textwidth}
			\centering
			\begin{tikzpicture}[scale=0.3]
				\draw[fill] (-2,0) circle (0.08);
				\node[left] at (-2,0) {$0$};
				
				\draw[fill] (0,-2) circle (0.08);
				\node[below] at (0,-2) {$1$};
				
				\draw[fill] (2,0) circle (0.08);
				\node[above right] at (2,0) {$2$};
				
				\draw[fill] (0,2) circle (0.08);
				\node[above] at (0,2) {$3$};
				
				\draw (-2,0)--(0,-2)--(2,0)--(0,2)--cycle;
				
				\begin{scope}[shift={(8,0)}]
					\draw[fill] (-2,0) circle (0.08);
					\node[above left] at (-2,0) {$4$};
					
					\draw[fill] (-1,-1.732) circle (0.08);
					\node[below left] at (-1,-1.732) {$5$};
					
					\draw[fill] (1,-1.732) circle (0.08);
					\node[below right] at (1,-1.732) {$6$};
					
					\draw[fill] (2,0) circle (0.08);
					\node[right] at (2,0) {$7$};
					
					\draw[fill] (1,1.732) circle (0.08);
					\node[above right] at (1,1.732) {$8$};
					
					\draw[fill] (-1,1.732) circle (0.08);
					\node[above left] at (-1,1.732) {$9$};
					
					\draw (-2,0)--(-1,-1.732)--(1,-1.732)--(2,0)--(1,1.732)--(-1,1.732)--cycle;
				\end{scope}
				\draw[dashed] (2,0)--(6,0);
			\end{tikzpicture}
			\caption{The signed dumbbell graph $D_{4,0,6}$ where the edges in the cycles and all  positive, and the bridge is negative.}
			\label{joined cycle 4 6}
		\end{subfigure}
		\hspace{0.25cm}
		\begin{subfigure}[t]{0.31\textwidth}
			\centering
			\includegraphics[width=\textwidth]{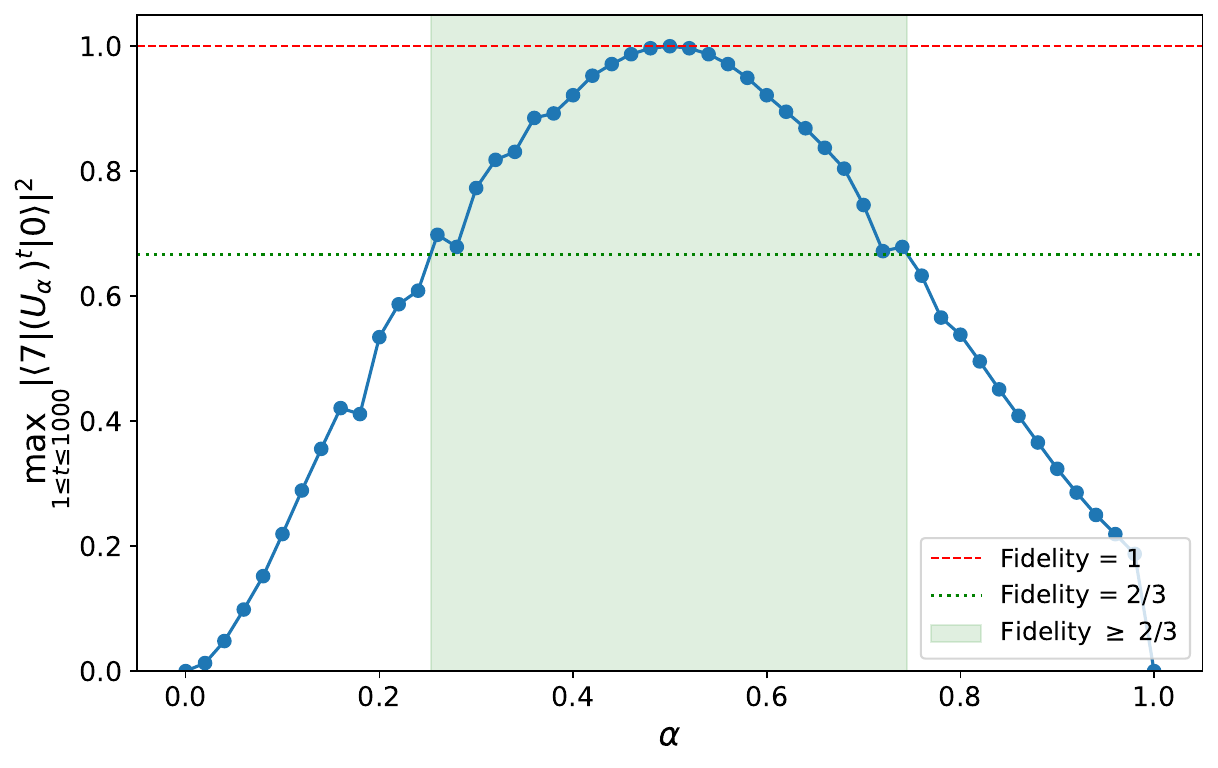}
			\caption{We plot the fidelity with respect to $\alpha$ to transmit information from vertex $0$ to $7$ of the graph in \autoref{joined cycle 4 6}. PST occurs only for $\alpha=0.5.$ The fidelity is more than $2/3$ for $0. 25 \le \alpha \le 0.74$.}
		\end{subfigure}
		\hspace{.25cm}
		\begin{subfigure}[t]{0.31\textwidth}
			\centering
			\includegraphics[width=\textwidth]{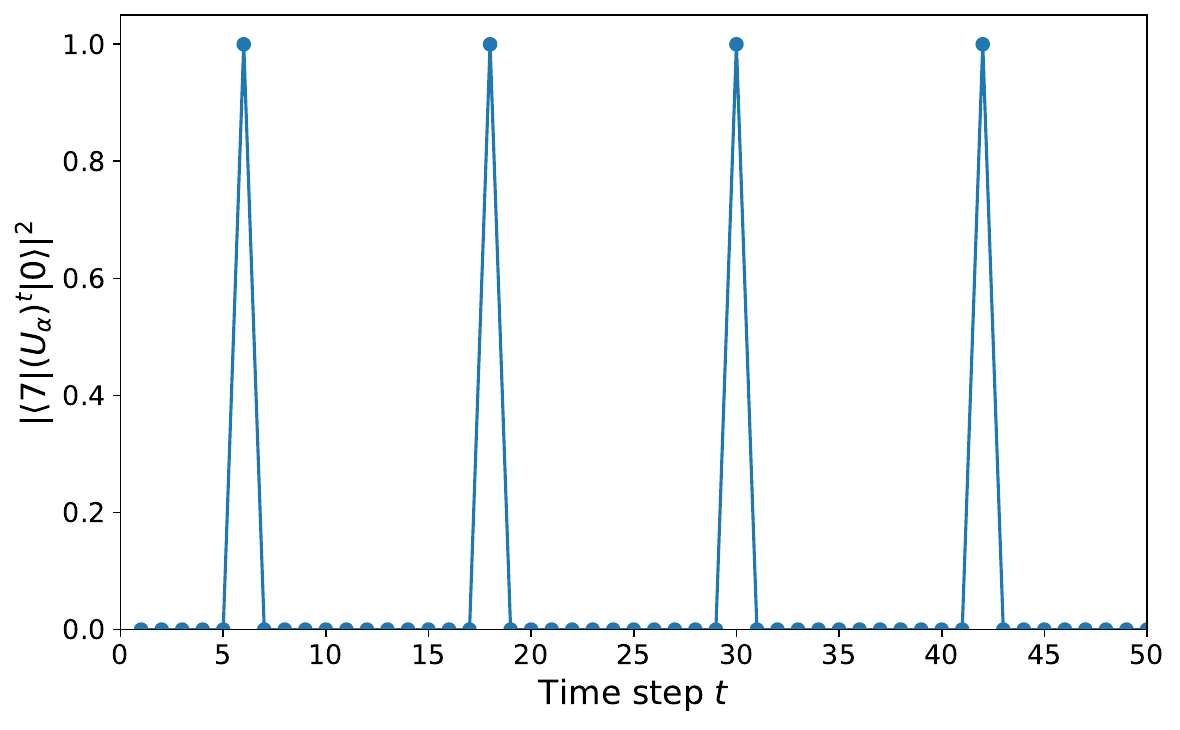}
			\caption{For $\alpha=0.5.$ we calculate fidelity from vertex $0$ to vertex $7$ of the graph in \autoref{joined cycle 4 6}. Note that PST occurs for $t = 6, 18, 30$, etc.}
		\end{subfigure}
		\caption{We study the fidelity of state transfer from vertex $0$ to vertex $7$ of the graph in \autoref{joined cycle 4 6} with respect to $\alpha$ and time.}
	\end{figure}

	\begin{figure}
		\centering
		\begin{tikzpicture}[scale=0.7]
			\draw[fill] (-2,0) circle (0.08);
			\node[left] at (-2,0) {$0$};
			
			\draw[fill] (-1.4,-1.4) circle (0.08);
			\node[left] at (-1.4,-1.4) {$1$};
			
			\node at (0,-1.8) {$\cdots$};
			
			\draw[fill] (1.4,-1.4) circle (0.08);
			\node[right] at (1.4,-1.4) {$m-1$};
			
			\draw[fill] (2,0) circle (0.08);
			\node[above right ] at (2,0) {$m$};
			
			\draw[fill] (1.4,1.4) circle (0.08);
			\node[right] at (1.4,1.4) {$m+1$};
			
			\node at (0,1.8) {$\cdots$};
			
			\draw[fill] (-1.4,1.4) circle (0.08);
			\node[left] at (-1.4,1.4) {$2m-1$};
			
			\draw
			(-2,0)--(-1.4,-1.4)
			(-1.4,-1.4)--(-0.45,-1.75)
			(0.45,-1.75)--(1.4,-1.4)
			(1.4,-1.4)--(2,0)
			(2,0)--(1.4,1.4)
			(1.4,1.4)--(0.45,1.75)
			(-0.45,1.75)--(-1.4,1.4)
			(-1.4,1.4)--(-2,0);
			
			\begin{scope}[shift={(8,0)}]
				
				\draw[fill] (-2,0) circle (0.08);
				\node[above left] at (-2,0) {$2m$};
				
				\draw[fill] (-1.4,-1.4) circle (0.08);
				\node[left] at (-1.4,-1.4) {$2m+1$};
				
				\node at (0,-1.8) {$\cdots$};
				
				\draw[fill] (1.4,-1.4) circle (0.08);
				\node[right] at (1.4,-1.4) {$2m+n-1$};
				
				\draw[fill] (2,0) circle (0.08);
				\node[right] at (2,0) {$2m+n$};
				
				\draw[fill] (1.4,1.4) circle (0.08);
				\node[right] at (1.4,1.4) {$2m+n+1$};
				
				\node at (0,1.8) {$\cdots$};
				
				\draw[fill] (-1.4,1.4) circle (0.08);
				\node[left] at (-1.4,1.4) {$2m+2n-1$};
				
				\draw
				(-2,0)--(-1.4,-1.4)
				(-1.4,-1.4)--(-0.45,-1.75)
				(0.45,-1.75)--(1.4,-1.4)
				(1.4,-1.4)--(2,0)
				(2,0)--(1.4,1.4)
				(1.4,1.4)--(0.45,1.75)
				(-0.45,1.75)--(-1.4,1.4)
				(-1.4,1.4)--(-2,0);
				
			\end{scope}
			\draw[dashed] (2,0)--(6,0);
		\end{tikzpicture}
		\caption{A general dumbbell graph $D_{2m, 0, 2n}$ which is described in \autoref{effect of sign thm}. PST occurs between $0$ and $2m+n$ at time $t=m+1+n$.}
		\label{joined cycle}
	\end{figure}
	
	\begin{theorem}\label{effect of sign thm}
		For $\alpha = 0.5$, the graph $\Sigma_2$ admits PST between $0$ and $2m + n$ at time $t = m + 1 + n$, whereas $\Sigma_1$ does not admit PST.
	\end{theorem}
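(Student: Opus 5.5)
The plan is to follow the wavepacket arc by arc, using Lemma~\ref{backward moving} along the degree-two stretches of the cycles and computing \eqref{szegedy matrix} explicitly at the two degree-three junctions $m$ and $2m$. First I will record the transition probabilities \eqref{prob} at $\alpha=0.5$.

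In $\Sigma_2$, every cycle vertex other than $m$ and $2m$ has degree two, so each outgoing arc has $p=\tfrac12$. At $m$ the two positive arcs $\overrightarrow{(m,m\pm1)}$ get $p=\tfrac14$ each, and the negative arc $\overrightarrow{(m,2m)}$ gets $p=\tfrac12$. Symmetrically, at $2m$ we have $p(\overrightarrow{(2m,2m+1)})=p(\overrightarrow{(2m,2m+2n-1)})=\tfrac14$ and $p(\overrightarrow{(2m,m)})=\tfrac12$. In $\Sigma_1$ every arc out of $m$ or $2m$ has $p=\tfrac13$.

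Next I will track $\ket{0}=\tfrac1{\sqrt2}\big(\ket{\overrightarrow{(0,1)}}+\ket{\overrightarrow{(0,2m-1)}}\big)$. By \eqref{evolution}, $U_\alpha$ sends an arc $\overrightarrow{(v,w)}$ to arcs terminating at $v$. Applying Lemma~\ref{backward moving} repeatedly, each branch advances one vertex per step without reflection. After $m$ steps the state is $\tfrac1{\sqrt2}\big(\ket{\overrightarrow{(m,m+1)}}+\ket{\overrightarrow{(m,m-1)}}\big)$.

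At step $m+1$ I will apply \eqref{szegedy matrix} at $m$. For $U_{0.5}\ket{\overrightarrow{(m,m+1)}}$ the coefficients are:
\begin{itemize}
\item on $\ket{\overrightarrow{(m-1,m)}}$: $2\cdot\tfrac12\cdot\tfrac12=\tfrac12$;
\item on $\ket{\overrightarrow{(m+1,m)}}$: $\tfrac12-1=-\tfrac12$;
\item on $\ket{\overrightarrow{(2m,m)}}$: $2\cdot\tfrac12\cdot\tfrac1{\sqrt2}=\tfrac1{\sqrt2}$.
\end{itemize}
The other branch gives the same coefficients with $m\pm1$ exchanged. In the superposition the cycle arcs cancel exactly and the state is precisely $\ket{\overrightarrow{(2m,m)}}$.

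At step $m+2$, Lemma~\ref{half lemma} kills the reflection term, since $p(\overrightarrow{(2m,m)})=\tfrac12$. Both forward coefficients equal $2\cdot\tfrac1{\sqrt2}\cdot\tfrac12=\tfrac1{\sqrt2}$, giving $\tfrac1{\sqrt2}\big(\ket{\overrightarrow{(2m+1,2m)}}+\ket{\overrightarrow{(2m+2n-1,2m)}}\big)$.

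Then $n-1$ further applications of Lemma~\ref{backward moving} bring both branches to the arcs $\overrightarrow{(2m+n,2m+n\mp1)}$. At total time $m+1+n$ this is exactly $\ket{2m+n}$, with $\lambda=1$.

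For $\Sigma_1$ I will redo the junction computation with $p=\tfrac13$. Now $U_{0.5}\ket{\overrightarrow{(m,m+1)}}$ has coefficients $\tfrac23,-\tfrac13,\tfrac23$. After symmetrisation, the amplitude on $\ket{\overrightarrow{(2m,m)}}$ is $\tfrac{2\sqrt2}{3}$, and amplitude $\tfrac1{3\sqrt2}$ remains on each of the two returning cycle arcs. A second leak occurs at $2m$, where $2\cdot\tfrac13-1=-\tfrac13$ is reflected back across the bridge. Propagating this, the fidelity at $t=m+1+n$ is strictly less than $1$; I expect $(2\sqrt2/3)^2\cdot(2/\sqrt3)^2$-type factors, giving $\tfrac{32}{27}\cdot\tfrac12$-like values below $1$.

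The main obstacle is showing that $\Sigma_1$ admits PST at no time $t$ at all, not merely at $t=m+1+n$. My first attempt will use unitarity and the light cone. By \eqref{evolution}, the support of $U^t\ket{0}$ expands by at most one arc per step, so $t\ge m+1+n$ is necessary. For larger $t$, I would show that the reflected wave at $m$ persists in the first cycle: a nonzero component of $U^t\ket{0}$ always lies in the invariant-in-norm subspace spanned by arcs of $C_{2m}$. I would argue this by a spectral decomposition of $U_{0.5}$ restricted to the reflection-symmetric subspace, showing that $\ket{0}$ and $\ket{2m+n}$ have different spectral supports. If that proves too delicate, I would fall back to exhibiting the failure at the natural arrival time, complemented by the numerical fidelity plot.
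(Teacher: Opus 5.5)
Your argument for $\Sigma_2$ is the paper's proof step for step. You take $m$ free steps by Lemma~\ref{backward moving}, then get exact cancellation of the cycle arcs at $m$, leaving $\ket{\overrightarrow{(2m,m)}}$ at time $m+1$. Lemma~\ref{half lemma} gives zero reflection at $2m$, and $n-1$ further free steps reach $\ket{2m+n}$.

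For $\Sigma_1$ the paper only asserts, from $p=\tfrac13$, that $(U_{0.5})^{m+1+n}\ket{0}\neq\ket{2m+n}$. Your fallback therefore has the same scope, but you should finish the computation rather than guess the value.
\begin{itemize}
\item Degree-two vertices still have $p=\tfrac12$ in $\Sigma_1$, so Lemma~\ref{backward moving} still applies there.
\item Anything whose origin lies in $C_{2m}$ at time $m+2$ needs at least $n+1$ further steps to reach $2m+n$, so it cannot arrive before time $m+n+3$. This includes the reflected component on $\overrightarrow{(m,2m)}$.
\item Hence the target amplitude at time $m+1+n$ comes only from the forward amplitudes $\tfrac{2\sqrt2}{3}\cdot\tfrac23=\tfrac{4\sqrt2}{9}$ placed at time $m+2$ on $\overrightarrow{(2m+1,2m)}$ and $\overrightarrow{(2m+2n-1,2m)}$. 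These then propagate unchanged.
\item This gives $\braket{2m+n|(U_{0.5})^{m+1+n}|0}=\tfrac89$, so the fidelity is $\tfrac{64}{81}$.
\end{itemize}
Your predicted $\tfrac{16}{27}$ is wrong: it is built from a factor $(2/\sqrt3)^2>1$, which cannot be a probability. The conclusion $F<1$ still stands.

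Your plan for the stronger claim (no PST at any time) goes beyond what the paper proves, and as sketched it would not work. First, the arcs of $C_{2m}$ do not span an invariant subspace. $U_{0.5}$ sends each of $\ket{\overrightarrow{(m,m\pm1)}}$ onto $\ket{\overrightarrow{(2m,m)}}$ with amplitude $\tfrac23$, so ``persistence'' in the first cycle would need a genuine argument.

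Second, and more seriously, comparing spectral supports cannot separate $\ket{0}$ from $\ket{2m+n}$ when $m=n$. Consider the automorphism of $D_{2m,0,2m}$ exchanging the two cycles: $m+k\leftrightarrow 2m+k$, with indices modulo $2m$ within each cycle, so $0\leftrightarrow 3m$. It preserves all transition probabilities, so its arc permutation commutes with $U_{0.5}$ and maps $\ket{0}$ to $\ket{3m}$. The two vertex states therefore have identical spectral measures. An all-times result would need a strong-cospectrality argument together with a condition on the eigenvalue phases, not a support comparison.
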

	
	\begin{proof}
		All vertices of $\Sigma_2$ except vertices $m$ and $2m$ have degree $2$. From equation (\ref{prob}), we get $p\overrightarrow{(m,m+1)}=\frac{0.5}{2}=0.25,$ $p\overrightarrow{(m,m-1)}=\frac{0.5}{2}=0.25$ and $p\overrightarrow{(m,2m)}=\frac{0.5}{1}=0.5.$ Again from equation (\ref{prob}) we get, $p\overrightarrow{(2m,2m+2n-1)}=\frac{0.5}{2}=0.25,$ $p\overrightarrow{(2m,2m+1)}=\frac{0.5}{2}=0.25,$ and $p\overrightarrow{(2m,m)}=\frac{0.5}{1}=0.5.$ Now, we have the following calculation:
		\begin{align*}
			& (U_{0.5})^{(m+1+n)}\ket{0}\\ &=(U_{0.5})^{(m+1+n)}{\frac{1}{\sqrt{2}}(\ket{\overrightarrow{(0,1)}} + \ket{\overrightarrow{(0,2m-1)}})} 
			\\
			&=(U_{0.5})^{((m-1)+1+n)}\frac{1}{\sqrt{2}}(U_{0.5}\ket{\overrightarrow{(0,1)}} + U_{0.5}\ket{\overrightarrow{(0,2m-1)}}) \\
			&=(U_{0.5})^{((m-2)+1+n)}\frac{1}{\sqrt{2}}(U_{0.5}\ket{\overrightarrow{(2m-1,0)}} + U_{0.5}\ket{\overrightarrow{(1,0)}})
			\hspace{0.25cm}\parbox[t]{0.40\linewidth}{%
				(From the lemma \ref{backward moving} and the fact that $N(0)=\{1,2m-1\}$.)
			}\\
			&=(U_{0.5})^{((m-3)+1+n)}\frac{1}{\sqrt{2}}(U_{0.5}\ket{\overrightarrow{(2m-2,2m-1)}} + U_{0.5}\ket{\overrightarrow{(2,1)}})
			\hspace{0.25cm}\hspace{0.25cm}\parbox[t]{0.34\linewidth}{%
				(From the lemma \ref{backward moving} and the facts that $N(2m-1)=\{0,2m-2\}$ and $N(1)=\{0,2\}$.)
			}\\
			&=(U_{0.5})^{((m-4)+1+n)}\frac{1}{\sqrt{2}}(U_{0.5}\ket{\overrightarrow{(2m-3,2m-2)}} + (U_{0.5})\ket{\overrightarrow{(3,2)}})
			\hspace{0.25cm}\parbox[t]{0.34\linewidth}{%
				(From the lemma \ref{backward moving} and the facts that $N(2m-2)=\{2m-1,2m-3\}$ and $N(2)=\{1,3\}$.)
			}\\
			&\hspace{3cm}\vdots\\
			&=(U_{0.5})^{((m-m)+1+n)}\frac{1}{\sqrt{2}}(U_{0.5}\ket{\overrightarrow{(2m-(m-1),2m-(m-2))}} \\
			&\qquad{}+ U_{0.5}\ket{\overrightarrow{(m-1,m-2)}})\\
			&=(U_{0.5})^{(0+1+n)}\frac{1}{\sqrt{2}}(U_{0.5}\ket{\overrightarrow{(m+1,m+2)}} + U_{0.5}\ket{\overrightarrow{(m-1,m-2)}})\\
			&=(U_{0.5})^n\frac{1}{\sqrt{2}}(U_{0.5}\ket{\overrightarrow{(m,m+1)}} + U_{0.5}\ket{\overrightarrow{(m,m-1)}})
			\hspace{0.25cm}\parbox[t]{0.40\linewidth}{%
				(from $N(m+1)=\{m,m+2\}$, $N(m-1)=\{m,m-2\}$ and lemma \ref{backward moving})
			}\\
			&=(U_{0.5})^n\frac{1}{\sqrt{2}}((((U_{0.5})_{0.5})_{\overrightarrow{(m+1,m)},\overrightarrow{(m,m+1)}}\ket{\overrightarrow{(m+1,m)}} + (U_{0.5})_{\overrightarrow{(m-1,m)},\overrightarrow{(m,m+1)}}\ket{\overrightarrow{(m-1,m)}} \\
			&\qquad {} + (U_{0.5})_{\overrightarrow{(2m,m)},\overrightarrow{(m,m+1)}}\ket{\overrightarrow{(2m,m)}})+ ((U_{0.5})_{\overrightarrow{(m+1,m)},\overrightarrow{(m,m-1)}}\ket{\overrightarrow{(m+1,m)}}  \\
			&\qquad {} +(U_{0.5})_{\overrightarrow{(m-1,m)},\overrightarrow{(m,m-1)}}\ket{\overrightarrow{(m-1,m)}} + (U_{0.5})_{\overrightarrow{(2m,m)},\overrightarrow{(m,m-1)}}\ket{\overrightarrow{(2m,m)}}))
			\hspace{0.25cm}\parbox[t]{0.22\linewidth}{%
				(from $N(m)=\{m+1,m-1,2m\}$ and equation \ref{evolution}.)
			}\\
			&=(U_{0.5})^n\frac{1}{\sqrt{2}}((2\sqrt{0.25}\sqrt{0.25}-1)\ket{\overrightarrow{(m+1,m)}}
			+ (2\sqrt{0.25}\sqrt{0.25})\ket{\overrightarrow{(m-1,m)}}\\	
			&\qquad {} + (2\sqrt{0.25}\sqrt{0.5})\ket{\overrightarrow{(2m,m)}} + (2\sqrt{0.25}\sqrt{0.25})\ket{\overrightarrow{(m+1,m)}} \\
			&\qquad{}+ (2\sqrt{0.25}\sqrt{0.25}-1)\ket{\overrightarrow{(m-1,m)}} + (2\sqrt{0.25}\sqrt{0.5})\ket{\overrightarrow{(2m,m)}}) 
			\hspace{0.25cm}\parbox[t]{0.25\linewidth}{%
				(from equations \ref{evolution} and \ref{szegedy matrix}.)
			}\\
			&=(U_{0.5})^n\frac{1}{\sqrt{2}}(-0.5\ket{\overrightarrow{(m+1,m)}}+ 0.5\ket{\overrightarrow{(m-1,m)}}+\sqrt{0.5}\ket{\overrightarrow{(2m,m)}}\\
			&\qquad{}+0.5\ket{\overrightarrow{(m+1,m)}}-0.5\ket{\overrightarrow{(m-1,m)}}+\sqrt{0.5}\ket{\overrightarrow{(2m,m)}})\\
			&=(U_{0.5})^n(0.5\ket{\overrightarrow{(2m,m)}}+0.5\ket{\overrightarrow{(2m,m)}})\\
			&=(U_{0.5})^n\ket{\overrightarrow{(2m,m)}}\\
			&=(U_{0.5})^{n-1}(U_{0.5}\ket{\overrightarrow{(2m,m)}}) \\
			&=(U_{0.5})^{n-1}((U_{0.5})_{\overrightarrow{(m,2m)}, \overrightarrow{(2m,m)}}\ket{\overrightarrow{(m,2m)}}+(U_{0.5})_{\overrightarrow{(2m+1,2m)},\overrightarrow{(2m,m)}}\ket{\overrightarrow{(2m+1,2m)}}\\
			&\qquad{}+(U_{0.5})_{\overrightarrow{(2m+2n-1,2m)},\overrightarrow{(2m,m)}}\ket{\overrightarrow{(2m+2n-1,2m)}})
			\hspace{0.5cm}\text{(from equation \ref{evolution} and \ref{szegedy matrix}.)} \\
			&=(U_{0.5})^{n-1}(0\ket{\overrightarrow{(m,2m)}}+(2\sqrt{0.5}\sqrt{0.25})\ket{\overrightarrow{(2m+1,2m)}}\\
			&\qquad{}+(2\sqrt{0.5}\sqrt{0.25})\ket{\overrightarrow{(2m+2n-1,2m)}})
			\hspace{0.5cm}\text{(from lemma \ref{half lemma} and  equation \ref{szegedy matrix}.)} \\
			&=(U_{0.5})^{n-1}\frac{1}{\sqrt{2}}(\ket{\overrightarrow{(2m+1,2m)}} + \ket{\overrightarrow{(2m+2n-1,2m)}})\\
			&=(U_{0.5})^{n-2}\frac{1}{\sqrt{2}}(U_{0.5}\ket{\overrightarrow{(2m+1,2m)}}+ U_{0.5}\ket{\overrightarrow{(2m+2n-1,2m)}})\\
			&=(U_{0.5})^{n-3}\frac{1}{\sqrt{2}}(U_{0.5}\ket{\overrightarrow{(2m+2,2m+1)}}+ U_{0.5}\ket{\overrightarrow{(2m+2n-2,2m+2n-1)}})
			\hspace{0.5cm}\parbox[t]{0.20\linewidth}{%
				(from $N(2m+1)=\{2m, 2m+2\}$,
				$N(2m+2n-1)=\{2m, 2m+2n-2\}$,\\
				and Lemma~\ref{backward moving}.)
			}\\
			&=(U_{0.5})^{n-4}\frac{1}{\sqrt{2}}(U_{0.5}\ket{\overrightarrow{(2m+3,2m+2)}}+ U_{0.5}\ket{\overrightarrow{(2m+2n-3,2m+2n-2)}})
			\hspace{0.5cm}\parbox[t]{0.2\linewidth}{%
				(from $N(2m+2)=\{2m+1, 2m+3\}$,
				$N(2m+2n-2)=\{2m+2n-1, 2m+2n-3\}$,
				and Lemma~\ref{backward moving}.)
			}\\
			&\hspace{3cm}\vdots\\
			&=(U_{0.5})^{n-n}\frac{1}{\sqrt{2}}(U_{0.5}\ket{\overrightarrow{(2m+n-1,2m+n-2)}} \\
			&\qquad{}+ U_{0.5}\ket{\overrightarrow{(2m+2n-(n-1),2m+2n-(n-2))}})\\
			&=\frac{1}{\sqrt{2}}(U_{0.5}\ket{\overrightarrow{(2m+n-1,2m+n-2)}}+ U_{0.5}\ket{\overrightarrow{(2m+n+1,2m+n+2)}})\\
			&=\frac{1}{\sqrt{2}}(\ket{\overrightarrow{(2m+n,2m+n-1)}}+ \ket{\overrightarrow{(2m+n,2m+n+1)}})
			\hspace{0.5cm}\parbox[t]{0.35\linewidth}{%
				(from $N(2m+n-1)=\{2m+n-2, 2m+n\}$,
				$N(2m+n+1)=\{2m+n+2, 2m+n\}$,
				and Lemma~\ref{backward moving}.)
			} \\
			&=\ket{2m+n}.
		\end{align*}	
		Therefore, there is a PST between the vertices $0$ and $(2m + n)$. In case of $\Sigma_1$, we have $p\overrightarrow{(m,m+1)} = p\overrightarrow{(m,m-1)} = p\overrightarrow{(m,2m)} = \frac{1}{3}$. Also, $p\overrightarrow{(2m,2m+2n-1)} = p\overrightarrow{(2m,2m+1)} = p\overrightarrow{(2m,m)} = \frac{1}{3}$. It leads to the fact that $(U_{0.5})^{(m+1+n)}\ket{0} \neq \ket{2m+n}$. Hence, in the graph $\Sigma_1$ there is no PST between $0$ and $(2m + n)$.
	\end{proof}
	
	Now we construct a topological quantum router with our dumbbell graph $D_{2m, 0, 2n}$. \autoref{effect of sign thm} states that $\Sigma_2$ allows PST between the extreme vertices $0$ and $(2m + n)$ when the bridge is negative. Place the source port $S$ at the input vertex $0$ located on the primary even cycle $C_{2m}$. The cycle $C_{2m}$ acts as an on-chip delay or memory loop. The exit port $D_{\text{exit}}$ is located at the antipodal vertex $(2m + n)$ on the cycle $C_{2n}$. The bridge arc $e_b = (m, 2m)$ may be positive or negative. 
	
	When the parameter of the unitary operator $\alpha = 0.5$, an arbitrary single-qubit quantum state $\vert{}\psi\rangle$ injected at port $0$ splits into two counter-propagating symmetric wavefronts: one clockwise along $(0, 1, \dots, m-1)$, and one counter-clockwise along $(0, 2m-1, \dots, m+1)$. Because the cycle has $2m$ vertices, both the paths to vertex $m$ have distance $m$. At time $t = m$, both wave-fronts arrive at junction node $m$ simultaneously. The routing decision is controlled by the sign $\sigma(e_b)$ in the following modes:
	
	\begin{enumerate}
		\item 
		\textbf{Optical Delay Mode ($\sigma(e_b) = +1$)}: 
		As the bridge is positive, vertex $m$ has three  positive edges and zero negative edges. The probabilities become $p = \frac{1}{3} \ne \frac{1}{2}$. From \autoref{half lemma}, the reflection-free condition $(U_\alpha)_{\vec{e}^{-1}, \vec{e}} = 0$ is violated ($2p - 1 = -\frac{1}{3} \ne 0$). Consequently, the state scatters and largely reflects back into the $2m$-cycle.
		
		\item 
		\textbf{Transmission Mode ($\sigma(e_b) = -1$)}: Now, the junction $m$ is incident to two positive edges and one negative bridge edge. The transition probabilities become $p(m, m+1) = 0.25$, $p(m, m-1) = 0.25$, and $p(m, 2m) = 0.5$. The wavepacket enters the $2n$-vertex cycle. It splits symmetrically around the ring, and refocuses at the antipodal terminal vertex $(2m+n)$ at time $\tau = m + 1 + n$ with unit fidelity $F = 1.0$.
	\end{enumerate}
	
	This architecture satisfies the criteria for a topological quantum router because its performance is enforced by global, discrete topological invariants. The routing mechanism does not require the cycles to be identical in size. No measurement is involved. The routing switch is executed purely by a discrete sign change $\sigma(e_b) \in \{+1, -1\}$.

	\section{Quantum communication in a noisy environment}
		
		The fundamental objective in this study is to understand how far an unknown quantum state can travel across our signed routing topologies before environmental noise destroys its quantum state fidelity. We assume that the quantum noise is applied on the state during each step of hopping. At each discrete time step $t \to t+1$, the walk undergoes a unitary coin-shift evolution $U$, followed by the noise superoperator $\mathcal{E}_\lambda$ \cite{kumar2018non}. Therefore, the state at time $(t + 1)$ can be written as 
		\begin{equation}
			\rho(t+1) = \mathcal{E}_\lambda\left( U \rho(t) U^\dagger \right) = \sum_k E_k \left( U \rho(t) U^\dagger \right) E_k^\dagger.
		\end{equation}
	
		\subsection{Amplitude damping Noise}
		
			In an integrated optical waveguide circuit, the state $\vert{}\vec{e}\rangle$ corresponds to a single-photon wave-packet traveling along a channel between two couplers.   The optical materials have an intrinsic absorption and scattering loss per unit distance. As the photon propagates from step to step, there is a probability $\lambda \in [0, 1]$ that the photon is absorbed or scattered out.   
			
			The Kraus operator representation of the generalized amplitude damping channel across an $N = 2|E(\overrightarrow{G})|$ dimensional arc Hilbert space is given by \cite{dutta2023qudit}:
			$$E_0 = \vert{}0\rangle\langle 0\vert{} + \sqrt{1 - \lambda} \sum_{i=1}^{N-1} \vert{}i\rangle\langle i\vert{}, \quad E_i = \sqrt{\lambda} \vert{}0\rangle\langle i\vert{} \quad (i = 1, \dots, N-1),$$
			where $\vert{}0\rangle$ represents the vacuum state.   Because $E_0$ damps all active orthogonal arc states by $\sqrt{1 - \lambda}$, the coherent population remaining in the computational arc subspace after one step is multiplied by $(1 - \lambda)$. After $t$ time steps, the fidelity to the target state scales as:
			$$F(t; \lambda) = (1 - \lambda)^t \cdot F_{\text{unitary}}(t)$$
			The perfect state transfer occurs at a specific, discrete arrival time $t = \tau$ with $F_{\text{unitary}}(\tau) = 1$. Therefore, $F(\tau; \lambda) = (1 - \lambda)^\tau.$ In case of glued signed tree $\tau = 2d$, where $d$ is the depth of the tree. Therefore, the delivered end-to-end fidelity at the target port scales as:
			$$F_{r_1 \rightarrow r_2}(\tau; \lambda) = (1 - \lambda)^{2d}.$$ 
			Also, for signed dumbbell graph $\tau = m + 1 + n$. With per-hop loss rate $\lambda$, the delivered state fidelity at the destination node is:
			$$F_{0 \to 2m+n}(\tau; \lambda) = (1 - \lambda)^{m + 1 + n}.$$
			
			In quantum networking, sending a degraded quantum state is only useful if it outperforms classical communication. If we try to transmit an unknown quantum state using only classical measurement and communication, the maximum achievable fidelity is bounded by $F_{\text{classical}} = \frac{2}{3} \approx 0.667$. For the state transfer on our graph to remain strictly quantum, the fidelity must exceed this threshold. To derive the maximum network hop limit $\tau_{\max}$ we apply the inequality $(1 - \lambda)^\tau > \frac{2}{3}$. It leads us to $\tau < \frac{\ln(2/3)}{\ln(1 - \lambda)}$. For small loss rates ($\lambda \ll 1$), using the approximation $\ln(1 - \lambda) \approx -\lambda$. Therefore,
			$$\tau_{\max} \approx \frac{\vert{}\ln(2/3)\vert{}}{\lambda} \approx \frac{0.4055}{\lambda}.$$
			In modern laser-written waveguides, loss is around $\lambda \approx 0.02$.Therefore, $$\tau_{\max} \approx \frac{0.4055}{0.02} \approx 20 \text{ hops}.$$
			This means our signed router can route quantum information across the dumbbell graphs with a total travel distance $\tau = m + 1 + n \le 20$ hops. In case of glued tree $2d \leq 20$, which requires the depth of the tree to satisfy $d \le 10$. The relation between the arrival time $\tau$ and state fidelity is depicted in \autoref{ADC_Figure}.
			
			\begin{figure}
				\centering
				\includegraphics[scale = .5]{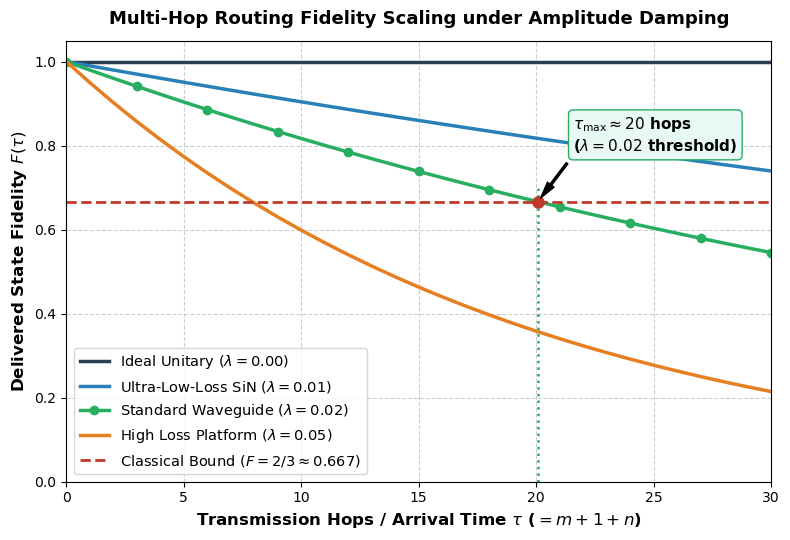}
				\caption{We plot the relation between the arrival time and state fidelity for different values of $\lambda$ in ADC channel.}
				\label{ADC_Figure}
			\end{figure}

		\subsection{Phase Damping Noise}
		
			Phase damping is physically distinct from amplitude damping. Amplitude damping causes photon leakage or energy loss. In contrast, phase damping preserves photon population completely, but destroys off-diagonal coherence. We use the Weyl-operator based Kraus operators for an $N = 2|E(G)|$ dimensional space:
			\begin{equation}
				E_{0,0} = \sqrt{1 - k} \, I; ~\text{and}~ E_{r,s} = \sqrt{\frac{k}{N^2 - 1}} \, U_{r,s} ~\text{for}~ ((r,s) \neq (0,0))
			\end{equation}
			Under pure dephasing that is $s = 0$, the Kraus operators reduce to
			\begin{equation}
				E_r = \frac{1}{\sqrt{N}} \sum_{m=0}^{N-1} e^{i \frac{2\pi}{N} r m} \vert{}m\rangle\langle m\vert{}
			\end{equation}
			When a state $\rho$ undergoes stochastic random phase fluctuations at each mode $\theta_m \sim \mathcal{N}(0, \sigma_\theta^2)$, we have $\langle m \vert{} \mathcal{E}(\rho) \vert{} n \rangle = \rho_{mn} \, \mathbb{E}\left[ e^{i(\theta_m - \theta_n)} \right]$. For independent Gaussian fluctuations $\mathbb{E}\left[ e^{i(\theta_m - \theta_n)} \right] = e^{-\sigma_\theta^2} \equiv (1 - p).$ While for diagonal terms ($m = n$), we have $\mathbb{E}\left[ e^{i(\theta_m - \theta_m)} \right] = 1$. Thus, in every single step
			\begin{equation}
				\rho_{ij}(1) = \begin{cases} \rho_{ii}(0), & i = j; \\ (1 - p)\rho_{ij}(0), & i \neq j; \end{cases}
			\end{equation}		
			where, $p \in [0, 1]$ is the dephasing probability per discrete time step along a waveguide channel or directional coupler. The noise channel is applied independently at each discrete hop $t = 1, 2, \dots, \tau$. Therefore, the off-diagonal elements evolve as:
			\begin{equation}
				\rho_{ij}(t) = (1 - p)^t \rho_{ij}(0) \quad (i \ne j).
			\end{equation}

			Now, consider the quantum walk on the dumbbell graph. The initial state is a symmetric coherent superposition of two counter-propagating paths: Clock-Wise (CW) and Counter-Clock-Wise (CCW). Therefore the initial state $\ket{\psi(0)}$ can be written as 
			$$\vert{}\psi(0)\rangle = \frac{1}{\sqrt{2}}\Big(\vert{}\text{CW}\rangle + \vert{}\text{CCW}\rangle\Big),$$
			with density matrix $\rho(0) = \ket{\psi(0)}\bra{\psi(0)}$. It has the off-diagonal coherence terms $\frac{1}{2}$. 
			
			At the destination node, when time $t = \tau$, perfect state transfer relies on constructive interference between these two paths. The interference cross-terms decay by a factor of $(1 - p)$ at every hop. Hence, the off-diagonal terms at time $\tau$ are given by
			$$\rho_{\text{CW}, \text{CCW}}(\tau) = \frac{1}{2}(1 - p)^\tau.$$
			Therefore, the delivered target state fidelity evaluates to:$$F_{\text{phase}}(\tau; p) = \langle\psi_{\text{target}}\vert{}\rho(\tau)\vert{}\psi_{\text{target}}\rangle = \frac{1}{2} + \frac{1}{2}(1 - p)^\tau.$$
			
			To beat the classical teleportation bound $F > 2/3$ we have $(1 - p)^\tau > \frac{1}{3}$, which indicates $\tau_{\max} = \frac{\ln(1/3)}{\ln(1 - p)} \approx \frac{1.0986}{p}$. 	For a realistic optical phase fluctuation rate $p = 0.02$, we obtain $\tau_{\max} \approx 55$ hops. Therefore, the phase damping preserves the quantum routing advantage across more than double the distance compared to photon loss ($\tau_{\max} \approx 20$ hops for amplitude damping). A similar derivation applies to the glued binary tree architecture.

	\section{Conclusion}
	
		In network physics, the routing schemes that  depend on fine-tuned path lengths or delicate coupling coefficients are fragile. Small manufacturing errors degrade optical interference and prevent the transmission of quantum information in the network. A topological router determines the routing path by discrete graph invariants. The arc-based Szegedy walk formulation on signed graphs admits a direct mapping onto integrated photonic waveguide circuits. The directed arc states $\ket{\vec{e}}$ represent discrete spatial waveguide modes carrying single-photon packets. Each vertex $u$ functions as a multi-mode optical directional coupler whose power division between positive and negative output ports is parameterized by the coin bias $\alpha$. A positive edge $\sigma(e) = +1$ corresponds to standard optical transmission, whereas a negative edge $\sigma(e) = -1$ is realized by introducing a localized $\pi$-phase shifter.

		Under this architecture, the sign-dependent perfect state transfer provides a mechanism for all-optical, measurement-free quantum routing. When the bridge connecting two even cycles carries a positive sign, destructive interference into the bridge mode is absent, trapping the quantum state in the primary cycle. The PST across the dumbbell does not depend on the specific sizes of the cycles. As long as both cycles are even-length bipartite cycles, the topological bipartite symmetry guarantees that the counter-propagating modes arrive at the junction exactly in phase. Another example of a topological router is the glued-tree architecture. In signed rooted glued trees, enforcing alternating edge signs across levels suppresses lateral leakage across sibling nodes through destructive interference. The walk behaves like a ballistically protected 1D channel connecting root vertices. As routing is governed by the combinatorial structure of graphs and discrete sign configuration rather than continuous parameter fine-tuning, this setup realizes a topologically protected, phase-controlled quantum router.
		
		This article presents all-unitary, coherent network routing protocols. No intermediate projective measurements or classical handshakes are performed. Consequently, transmission is non-destructive, preserving all superposition amplitudes and coherence phases intact. The routing selection is executed via a binary topological parameter which is the sign on the edges $\sigma \in \{+1, -1\}$ rather than continuous, analog parameter fine-tuning.
		
		Furthermore, our open-system noise analysis demonstrates that both architectures exhibit substantial tolerance to environmental decoherence. While photon absorption (amplitude damping) imposes an operational horizon of $\tau_{\max} \approx 0.4055/\lambda$ hops before falling to the classical benchmark ($F = 2/3$), pure dephasing (phase damping) retains a quantum routing advantage across more than double this distance ($\tau_{\max} \approx 1.0986/p$ hops). Because switching operations are driven entirely by binary topological gauge signs ($\sigma = \pm 1$) rather than continuous analog tuning, this paradigm provides an autonomous, fault-tolerant foundation for integrated photonic quantum routing fabrics.

	\section*{Acknowledgments} 
	
		The authors acknowledge the use of Gemini (Google) for language correction, scientific copy-editing, and manuscript structuring. All theoretical models, proofs, scientific interpretations, and final editorial revisions were thoroughly reviewed, verified, and approved by the authors, who take full responsibility for the contents and scientific integrity of this work.


\begin{thebibliography}{10}
		
		\bibitem{bose2003quantum}
		Sougato Bose.
		\newblock \href{https://arxiv.org/abs/quant-ph/0212041}{Quantum communication
			through an unmodulated spin chain}.
		\newblock {\em Physical review letters}, 91(20):207901, 2003.
		
		\bibitem{zueco2009quantum}
		David Zueco, Fernando Galve, Sigmund Kohler, and Peter H{\"a}nggi.
		\newblock \href{https://arxiv.org/abs/0905.4677}{Quantum router based on ac
			control of qubit chains}.
		\newblock {\em Physical Review A—Atomic, Molecular, and Optical Physics},
		80(4):042303, 2009.
		
		\bibitem{paganelli2013routing}
		Simone Paganelli, Salvatore Lorenzo, Tony~JG Apollaro, Francesco Plastina, and
		Gian~Luca Giorgi.
		\newblock \href{https://arxiv.org/abs/1301.5610}{Routing quantum information in
			spin chains}.
		\newblock {\em Physical Review A—Atomic, Molecular, and Optical Physics},
		87(6):062309, 2013.
		
		\bibitem{dutta2023quantum}
		Supriyo Dutta.
		\newblock \href{https://arxiv.org/abs/2302.10074}{Quantum routing in planar
			graph using perfect state transfer: S. Dutta}.
		\newblock {\em Quantum Information Processing}, 22(10):383, 2023.
		
		\bibitem{portugal2013quantum}
		Renato Portugal.
		\newblock {\em Quantum walks and search algorithms}.
		\newblock Springer Science \& Business Media, 2013.
		
		\bibitem{dutta2026perfect}
		Supriyo Dutta.
		\newblock Perfect state transfer using markovian quantum walk.
		\newblock {\em Annals of Physics}, 488:170411, 2026.
		
		\bibitem{zhan2014perfect}
		Xiang Zhan, Hao Qin, Zhi-hao Bian, Jian Li, and Peng Xue.
		\newblock \href{https://arxiv.org/abs/1405.6422}{Perfect state transfer and
			efficient quantum routing: A discrete-time quantum-walk approach}.
		\newblock {\em Physical Review A}, 90(1):012331, 2014.
		
		\bibitem{segawa2011localization}
		Etsuo Segawa.
		\newblock \href{https://arxiv.org/abs/1112.4982}{Localization of quantum walks
			induced by recurrence properties of random walks}.
		\newblock {\em arXiv preprint arXiv:1112.4982}, 2011.
		
		\bibitem{higuchi2017periodicity}
		Yusuke Higuchi, Norio Konno, Iwao Sato, and Etsuo Segawa.
		\newblock
		\href{https://www.jstage.jst.go.jp/article/iis/23/1/23_2017.A.10/_article}{Periodicity
			of the discrete-time quantum walk on a finite graph}.
		\newblock {\em Interdisciplinary Information Sciences}, 23(1):75--86, 2017.
		
		\bibitem{ozawa2019topological}
		Tomoki Ozawa, Hannah~M Price, Alberto Amo, Nathan Goldman, Mohammad Hafezi,
		Ling Lu, Mikael~C Rechtsman, David Schuster, Jonathan Simon, Oded Zilberberg,
		et~al.
		\newblock \href{https://arxiv.org/abs/1802.04173}{Topological photonics}.
		\newblock {\em Reviews of Modern Physics}, 91(1):015006, 2019.
		
		\bibitem{li2022multiport}
		Meng-Yu Li, Wen-Jie Chen, Xin-Tao He, and Jian-Wen Dong.
		\newblock
		\href{https://www.frontiersin.org/journals/physics/articles/10.3389/fphy.2022.902533/full}{Multiport
			Routing of Topologically Optical Transport Based on Merging of
			Valley-Dependent Edge States and Second-Order Corner States}.
		\newblock {\em Frontiers in Physics}, 10:902533, 2022.
		
		\bibitem{brown2012perfect}
		John Brown, Chris Godsil, Devlin Mallory, Abigail Raz, and Christino Tamon.
		\newblock \href{https://arxiv.org/abs/1211.0505}{Perfect state transfer on
			signed graphs}.
		\newblock {\em arXiv preprint arXiv:1211.0505}, 2012.
		
		\bibitem{harary1953notion}
		Frank Harary.
		\newblock On the notion of balance of a signed graph.
		\newblock {\em Michigan Mathematical Journal}, 2(2):143--146, 1953.
		
		\bibitem{west2001introduction}
		Douglas~Brent West et~al.
		\newblock {\em Introduction to graph theory}, volume~2.
		\newblock Prentice hall Upper Saddle River, 2001.
		
		\bibitem{wei2022towards}
		Shi-Hai Wei, Bo~Jing, Xue-Ying Zhang, Jin-Yu Liao, Chen-Zhi Yuan, Bo-Yu Fan,
		Chen Lyu, Dian-Li Zhou, You Wang, Guang-Wei Deng, et~al.
		\newblock Towards real-world quantum networks: a review.
		\newblock {\em Laser \& Photonics Reviews}, 16(3):2100219, 2022.
		
		\bibitem{azuma2023quantum}
		Koji Azuma, Sophia~E Economou, David Elkouss, Paul Hilaire, Liang Jiang,
		Hoi-Kwong Lo, and Ilan Tzitrin.
		\newblock Quantum repeaters: From quantum networks to the quantum internet.
		\newblock {\em Reviews of Modern Physics}, 95(4):045006, 2023.
		
		\bibitem{szegedy2004quantum}
		Mario Szegedy.
		\newblock Quantum speed-up of markov chain based algorithms.
		\newblock In {\em 45th Annual IEEE symposium on foundations of computer
			science}, pages 32--41. IEEE, 2004.
		
		\bibitem{szegedy2004spectra}
		Mario Szegedy.
		\newblock Spectra of quantized walks and a $\sqrt{\delta \varepsilon}$ rule.
		\newblock {\em arXiv preprint quant-ph/0401053}, 2004.
		
		\bibitem{christandl2004perfect}
		Matthias Christandl, Nilanjana Datta, Artur Ekert, and Andrew~J Landahl.
		\newblock Perfect state transfer in quantum spin networks.
		\newblock {\em Physical review letters}, 92(18):187902, 2004.
		
		\bibitem{chapman2016experimental}
		Robert~J Chapman, Matteo Santandrea, Zixin Huang, Giacomo Corrielli, Andrea
		Crespi, Man-Hong Yung, Roberto Osellame, and Alberto Peruzzo.
		\newblock Experimental perfect state transfer of an entangled photonic qubit.
		\newblock {\em Nature communications}, 7(1):11339, 2016.
		
		\bibitem{kumar2018non}
		N~Pradeep Kumar, Subhashish Banerjee, R~Srikanth, Vinayak Jagadish, and
		Francesco Petruccione.
		\newblock Non-markovian evolution: a quantum walk perspective.
		\newblock {\em Open Systems \& Information Dynamics}, 25(03):1850014, 2018.
		
		\bibitem{dutta2023qudit}
		Supriyo Dutta, Subhashish Banerjee, and Monika Rani.
		\newblock Qudit states in noisy quantum channels.
		\newblock {\em Physica Scripta}, 98(11):115113, 2023.
		
	\end{thebibliography}

\end{document}